\documentclass[12pt]{article}
\usepackage{amssymb,amsmath,amsthm,amscd,latexsym}
\usepackage{mathrsfs}
\usepackage{mathrsfs}
\usepackage{amsfonts}
\usepackage{amsmath}
\usepackage{amssymb}
\usepackage{amscd}
\usepackage{mathrsfs}
\usepackage{amssymb}
\usepackage{amsmath}
\usepackage{amsthm}
\usepackage{color}
\usepackage{latexsym}
\usepackage{indentfirst}
\usepackage{enumitem}
\usepackage{anysize}
\usepackage{bbm}
\usepackage[normalem]{ulem}
\usepackage{soul}
 \usepackage{cite}
\usepackage{cancel}

\renewcommand{\paragraph}{\roman{paragraph}}

 \setlength{\topmargin}{0in}
\setlength{\textheight}{8.5in} \setlength{\textwidth}{6.0in}
\oddsidemargin=0.3in \evensidemargin=-0.3in
\input cyracc.def

\parskip 3pt

\newcommand{\Z}{\mathbb{Z}}

\newcommand{\G}{\mathbb{G}}
\newcommand{\vv}{\mathbf{v}}
\newcommand{\x}{\mathbf{x}}
\newcommand{\y}{\mathbf{y}}

\theoremstyle{definition}

\newtheorem{thm}{Theorem}

\newtheorem{lem}{Lemma}
\newtheorem{prop}{Proposition}

\newcommand{\C}{\mathcal{C}}

\begin{document}
\title{\bf Rank and pairs of Rank and Dimension of Kernel of $\mathbb{Z}_p\mathbb{Z}_{p^2}$-linear codes
\thanks{This research is supported by the National Natural Science Foundation of China (61672036),
the Excellent Youth Foundation of Natural Science Foundation of Anhui Province (1808085J20), the
Academic Fund for Outstanding Talents in Universities (gxbjZD03).}}

\author{
\small{Xiaoxiao Li$^{1}$, Minjia Shi$^{1}$, Shukai Wang$^{1}$}\\ 
\and \small{${}^1$School of Mathematical Sciences, Anhui University, Hefei, 230601, China}\\
}
\date{}
\maketitle
\begin{abstract}
A code $C$ is called $\Z_p\Z_{p^2}$-linear if it is the Gray image of a $\Z_p\Z_{p^2}$-additive code. For any
prime number $p$ larger than $3$, the bounds of the rank of $\Z_p\Z_{p^2}$-linear codes are given. For each value of the rank and the pairs of rank and the dimension of the kernel of $\Z_p\Z_{p^2}$-linear codes, we give detailed construction of the corresponding codes. Finally, as an example, the rank and the dimension of the kernel of $\Z_5\Z_{25}$-linear codes are studied.
\end{abstract}

{\bf Keywords:} Linear codes, $\Z_p\Z_{p^2}$-linear codes, $\Z_p\Z_{p^2}$-additive codes, kernel and rank
\section{Introduction}
In 1973, Delsarte first defined the additive codes in terms of association schemes \cite{DP}, it is a subgroup of the underlying abelian group.
At first, the research on $\Z_2\Z_4$-additive codes aroused widespread interest. Borges et al. \cite{BFPR10} studied the standard generator matrix and the duality of $\Z_2\Z_4$-additive codes. Dougherty et al. \cite{DLY16} constructed one weight $\Z_2\Z_4$-additive codes and analyzed their parameters. Benbelkacem et al. \cite{BBDF20} studied $\Z_2\Z_4$-additive complementary dual codes and their Gray images. More structure properties of $\Z_2\Z_4$-additive codes can be found in \cite{M92,BBCM,BBDF11,JTCR2}.

Additive codes over different mixed alphabet have also been intensely studied, for example $\Z_2\Z_2[u]$-additive codes \cite{BC2}, $\Z_2\Z_{2^s}$-additive codes \cite{AS13}, $\Z_{p^r}\Z_{p^s}$-additive codes \cite{AS15}, $\Z_2\Z_2[u,v]$-additive codes \cite{SW}, $\Z_p\Z_{p^k}$-additive codes \cite{SWD} and $\Z_p(\Z_p+u\Z_p)$-additive codes \cite{WS}, and so on. It is worth mentioning that $\Z_2\Z_4$-additive cyclic codes form an important family of $\Z_2\Z_4$-additive codes, many optimal binary codes can be obtained from the images of this family of codes. More details of $\Z_2\Z_4$-additive cyclic codes can be found in \cite{BC,JCR,JTCR,JTCR2,YZ20}.

Let $\mathbb{Z}_p$ and $\mathbb{Z}_{p^2}$ be the ring of integers modulo $p$ and $p^2$, respectively, where $p>2$ is prime.
For a positive integer $m$, $\Z_p^m$ and $\Z_{p^2}^m$ are the ring extension.
The linear codes over $\Z_p$ and $\Z_{p^2}$ are subgroups of $\Z_p^m$ and $\Z_{p^2}^m$, respectively.

The usual Gray map from $\Z_4$ to $\Z_2^2$, given in \cite{a15}, has been generalized to a Gray map from $\Z_{2^s}$ to $\Z_{2}^{2^{s-1}}$ in \cite{a7,a19}. Actually, Carlet's Gray map from \cite{a7} is a particular case of the Gray map given in \cite{a19} satisfying $\sum\lambda_i\phi(2^i)=\phi(\Sigma\lambda_i2^i)$ \cite{a11}. Let $Y$ be the matrix of size $(s-1)\times p^{s-1}$ whose columns are all different vectors in $\Z_p^{s-1}$. Carlet's Gray map from $\Z_{2^s}$ to $\Z_{2}^{2^{s-1}}$ \cite{a7} can be generalized to a map from $\Z_{p^s}$ to $\Z_p^{p^{s-1}}$ as follows \cite{a13}:
$$\phi(u)=(u_{s-1},u_{s-1},\ldots,u_{s-1})+(u_{0},u_{1},\ldots,u_{s-2})Y,$$
where $u\in \Z_{p^s}$; $[u_{0},u_{1},\ldots,u_{s-1}]_p$ is the $p$-ary expansion of $u$, that is, $u=\sum_{i=0}^{s-1}u_ip^i$ with $u_i\in \Z_p$. In this paper, we consider the case $s=2$.

Let $\C$ be a $\Z_p\Z_{p^2}$-additive code of length $\alpha+\beta$ defined by a subgroup of $\Z_p^\alpha\times\Z_{p^2}^\beta$. Let $n=\alpha+p\beta$ and
$\Phi:\Z_p^\alpha\times\Z_{p^2}^\beta\rightarrow \Z_p^n$ is an extension of the Gray map, which
$$\Phi(\x,\y)=(\x,\phi(y_1),\ldots,\phi(y_\beta)),$$
for any $\x\in \Z_p^\alpha$, and $\y=(y_1,\ldots,y_\beta)\in \Z_{p^2}^\beta$. The code $C=\Phi(\C)$ is called a $\Z_p\Z_{p^2}$\emph{-linear code}.

There are two parameters of $\Z_p\Z_{p^2}$\emph{-linear code} $C=\Phi(\C)$ which is a nonlinear code: the rank and the dimension of the kernel.
We denote $\langle C\rangle$ the linear span of the codewords of $C$. The dimension of $\langle C\rangle$ is called the \emph{rank} of the code $C$, denoted by $rank(C)$.
The \emph{kernel} of the code $C$, which is denoted as $K(C)$, is defined as:
$$K(C)=\{\x\in\Z_p^{n}|C+\x=C\},$$
where $C+\x$ means $\x$ adds all codewords in $C$.
We will denote the dimension of the kernel of $C$ by \emph{$ker(C)$}. These two parameters are helpful to the classification of $\Z_p\Z_{p^2}$-linear codes.

The rank and dimension of the kernel have been studied for some families of $\Z_2\Z_4$-linear codes \cite{BPR,FPV,PRV}.
In \cite{CFC}, Fern\'{a}ndez et al. studied the rank and kernel of $\Z_2\Z_4$-linear codes.
They also studied the rank and kernel of $\Z_2\Z_4$-additive cyclic codes in \cite{JTCR2}.
In \cite{SWL}, the authors studied the rank of $\Z_3\Z_9$-linear codes and the kernel of $\Z_p\Z_{p^2}$-linear codes.
In this paper, we consider the general case for the rank and kernel of $\Z_p\Z_{p^2}$-linear codes.

The paper is organized as follows.
In Section \ref{sec:2}, we give some properties about $\mathbb{Z}_p\mathbb{Z}_{p^2}$-additive codes and $\mathbb{Z}_p\mathbb{Z}_{p^2}$-linear codes. Section \ref{sec:4} gives an explicit formulation of a special equation which helps us to study the rank of $\mathbb{Z}_p\mathbb{Z}_{p^2}$-linear codes. Then, we find all values of the rank for $\mathbb{Z}_p\mathbb{Z}_{p^2}$-linear codes, and we construct a $\mathbb{Z}_p\mathbb{Z}_{p^2}$-linear code for each value.
In Section \ref{sec:5}, pairs of rank and the dimension of the kernel of $\mathbb{Z}_p\mathbb{Z}_{p^2}$-additive codes are studied.
For each fixed value of the dimension of the kernel, the range of rank is given.
And the construction method of the $\mathbb{Z}_p\mathbb{Z}_{p^2}$-linear codes with pairs of rank and the dimension of the kernel is provided.
As an example, the rank and the dimension of the kernel of $\Z_5\Z_{25}$-linear codes are studied in section \ref{sec:6}. In Section \ref{sec:7}, we conclude the paper.

\section{Background material}\label{sec:2}
Let $\mathcal{C}$ be a $\mathbb{Z}_p\mathbb{Z}_{p^2}$-additive code. Then $\mathcal{C}$ is isomorphic to an abelian structure $\mathbb{Z}_p^\gamma \times \mathbb{Z}_{p^2}^\delta$ since it is a subgroup of $\mathbb{Z}_p^\alpha \times \mathbb{Z}_{p^2}^\beta$.
The \emph{order} of a codeword $\mathbf{c}$ means the minimal positive integer $a$ such that $a\cdot\mathbf{c}=\mathbf{0}$.
Therefore, the size of $\mathcal{C}$ is $p^{\gamma+2\delta}$, and the number of codewords of order $p$ in $\mathcal{C}$ is $p^{\gamma+\delta}$. We denote $\mathcal{C}_X$ is the punctured code of $\mathcal{C}$ by deleting all coordinates over $\mathbb{Z}_p$. Let $\mathcal{C}_p$ be the subcode consisting of all codewords of order $p$ in $\mathcal{C}$. Let $\kappa$ be the dimension of the linear code $(\mathcal{C}_p)_X$ over $\Z_p$.
If $\alpha=0$, then $\kappa=0$.
Considering all these parameters, we will say that $\mathcal{C}$ or $C=\Phi(\mathcal{C})$ is of type $(\alpha, \beta; \gamma, \delta; \kappa)$. The code $\mathcal{C}$ is generated by $\gamma$ codewords $\mathbf{u}_1,\mathbf{u}_2,\ldots,\mathbf{u}_\gamma$ of order $p$ and $\delta$ codewords $\mathbf{v}_1,\mathbf{v}_2,\ldots,\mathbf{v}_\delta$ of order $p^2$, which means every codeword $\mathbf{c}\in \mathcal{C}$ can be uniquely expressed in the form
$$\mathbf{c}=\sum_{i=1}^\gamma \lambda_i\mathbf{u}_i+\sum_{j=1}^\delta \nu_j\mathbf{v}_j,$$
where $\lambda_i \in \mathbb{Z}_p$ for $1\leq i \leq \gamma$ and $\nu_j \in \mathbb{Z}_{p^2}$ for $1 \leq j \leq \delta$, respectively. Then, we get the generator matrix $\mathcal{G}$ for the code $\mathcal{C}$ by words $\mathbf{u}_i$ and $\mathbf{v}_j$.

Recall that in coding theory, two linear codes $C_1$ and $C_2$ of length $n$ are \emph{permutation equivalent}
if there exists a coordinate permutation $\pi$ such that $C_2=\{\pi(c)|c\in C_1\}$.
The permutation equivalent of $\Z_p\Z_{p^2}$-additive codes can be defined similarly.
Then we have the following theorem.

\begin{thm}{\rm\cite{AS15}} \label{th:1}
Let $\mathcal{C}$ be a $\mathbb{Z}_p\mathbb{Z}_{p^2}$-additive code of type $(\alpha,\beta;\gamma,\delta;\kappa)$. Then, $\mathcal{C}$ is permutation equivalent to a $\mathbb{Z}_p\mathbb{Z}_{p^2}$-additive code $\mathcal{C}^\prime$ with generator matrix of the form:
$$\mathcal{G}_S=\left(\begin{array}{cc|ccc}I_\kappa & T' & pT_2 & \mathbf{0} & \mathbf{0}\\
\mathbf{0} & \mathbf{0} & pT_1 & pI_{\gamma-\kappa} & \mathbf{0}\\
\hline
\mathbf{0} & S' & S & R & I_\delta
\end{array}\right),$$
where $I_\delta$ is the identity matrix of size $\delta\times \delta$; $T', S', T_1, T_2$ and $R$ are matrices over $\mathbb{Z}_p$; and $S$ is a matrix over $\mathbb{Z}_{p^2}$.
\end{thm}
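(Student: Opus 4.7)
The plan is to reduce an arbitrary generator matrix of $\mathcal{C}$ to the stated form by the mixed-alphabet analogue of Smith normal form: a finite sequence of order-preserving row operations together with column permutations internal to the $\alpha$-block and the $\beta$-block. Starting from any generating set with $\gamma$ rows $\mathbf{u}_1,\dots,\mathbf{u}_\gamma$ of order $p$ stacked on top of $\delta$ rows $\mathbf{v}_1,\dots,\mathbf{v}_\delta$ of order $p^2$, the admissible elementary operations---each of which preserves both the code and the orders of the generators---are: (i) $\mathbb{Z}_p$-linear combinations inside the top block; (ii) $\mathbb{Z}_{p^2}$-linear combinations inside the bottom block; (iii) replacing $\mathbf{u}_i$ by $\mathbf{u}_i+pa\,\mathbf{v}_j$ with $a\in\mathbb{Z}_p$, which still has order $p$ since multiplication by $p$ annihilates the $p^2$-torsion; (iv) replacing $\mathbf{v}_j$ by $\mathbf{v}_j+b\,\mathbf{u}_i$ with $b\in\mathbb{Z}_p$, which keeps order $p^2$; and (v) column permutations inside $\mathbb{Z}_p^\alpha$ or inside $\mathbb{Z}_{p^2}^\beta$.

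The reduction proceeds in four stages. First, I apply Gaussian elimination over $\mathbb{Z}_{p^2}$ to the $\delta$ bottom rows restricted to $\mathbb{Z}_{p^2}^\beta$: since each $\mathbf{v}_j$ has order $p^2$, its $\mathbb{Z}_{p^2}^\beta$-part must contain a unit coordinate (otherwise $p\mathbf{v}_j$ would already be zero), so a unit pivot exists, and a column permutation then places $I_\delta$ in the rightmost $\delta$ columns of the $\beta$-block. Second, I use these $\delta$ pivots together with operation (iii) to clear the same $\delta$ columns in the top $\gamma$ rows; this is admissible because the top entries in those columns already lie in $p\mathbb{Z}_{p^2}$. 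Third, the top $\gamma$ rows restricted to the remaining $\beta-\delta$ columns now form a matrix over $p\mathbb{Z}_{p^2}\cong\mathbb{Z}_p$; by (i) and (v) I row-reduce it, yielding a $pI_{\gamma-\kappa}$ block in a new set of pivot columns, where $\gamma-\kappa$ is its $\mathbb{Z}_p$-rank, while the remaining $\kappa$ top rows now vanish identically on $\mathbb{Z}_{p^2}^\beta$. Fourth, those $\kappa$ residual top rows must have nonzero $\mathbb{Z}_p^\alpha$-part (otherwise they would be zero rows), so Gaussian elimination on the $\alpha$-block via (i) and (v) yields $I_\kappa$; operations (i) and (iv) then zero out the first $\kappa$ entries of the $\alpha$-block in every other row.

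The residual entries $T',T_1,T_2,S',S,R$ are simply whatever remains after these pivotings, and since the theorem asserts only existence of some form, no further simplification is required. The main obstacle is the bookkeeping that every operation respects the $\mathbb{Z}_p$/$\mathbb{Z}_{p^2}$ structure---most delicately in (iii), where the coefficient of $\mathbf{v}_j$ is forced to lie in $p\mathbb{Z}_{p^2}$ for $\mathbf{u}_i$ to retain order $p$---and the separate verification that the integer $\kappa$ emerging in stage three is an invariant of $\mathcal{C}$ matching the intrinsic definition $\dim_{\mathbb{Z}_p}(\mathcal{C}_p)_X$ given before the theorem, rather than an artifact of the chosen generating set; this is checked by reading off generators of $(\mathcal{C}_p)_X$ directly from the final form and comparing dimensions.
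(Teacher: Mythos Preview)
The paper itself does not prove this theorem---it is quoted from Aydogdu and Siap---so there is no in-paper argument to compare against; the standard proof, like yours, proceeds by elementary row operations and column permutations within each block. Your list of admissible operations (i)--(v) is correct, and stages one and two are fine. The gap is in the order of stages three and four: by pivoting on the $\beta$-block of the order-$p$ rows \emph{before} the $\alpha$-block, you identify ``$\kappa$'' as $\gamma$ minus the $\mathbb{Z}_p$-rank of the $\beta$-parts of the $\mathbf{u}_i$, and this quantity need not equal the intrinsic $\kappa=\dim_{\mathbb{Z}_p}(\mathcal{C}_p)_X$. The $\beta$-rank of a chosen set of order-$p$ generators is not a code invariant (different choices of the $\mathbf{u}_i$ give different $\beta$-ranks), whereas their $\alpha$-rank \emph{is} invariant precisely because every $p\mathbf{v}_j$ has zero $\alpha$-part, so the $\alpha$-projections of the $\mathbf{u}_i$ always span $(\mathcal{C}_p)_X$.

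A concrete failure: take $p=3$, $\alpha=2$, $\beta=1$, $\delta=0$, with $\mathbf{u}_1=(1,0\mid 3)$ and $\mathbf{u}_2=(0,1\mid 3)$. The true $\kappa$ is $2$, and the standard form is $(I_2\mid pT_2)$ with $T_2=\binom{1}{1}$. Your stage three sees $\beta$-parts of rank $1$, declares $\gamma-\kappa=1$, and after stage four terminates at the rows $(1,2\mid 0)$ and $(0,1\mid 3)$. This matches $\mathcal{G}_S$ for neither $\kappa=1$ (the second row has nonzero $\alpha$-part, contradicting the $(\mathbf 0,\mathbf 0)$ block) nor $\kappa=2$ (the top-left block is not $I_2$); your closing ``verification by reading off generators'' would here return dimension $2\neq 1$ and so detect, rather than confirm, the mismatch. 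The remedy is to reverse the order: first row-reduce the $\alpha$-block of all $\gamma$ order-$p$ rows to produce $\bigl(\begin{smallmatrix}I_\kappa & T'\\ \mathbf{0}&\mathbf{0}\end{smallmatrix}\bigr)$ with the correct $\kappa$; the $\gamma-\kappa$ rows that now have zero $\alpha$-part necessarily have independent $\beta$-parts, on which a second reduction yields $pI_{\gamma-\kappa}$, and using these pivots to clean the first $\kappa$ rows leaves the residual $pT_2$ exactly as in $\mathcal{G}_S$.
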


From Theorem \ref{th:1}, there is a $\mathbb{Z}_p \mathbb{Z}_{p^2}$-additive code $\mathcal{C}$ of type $(\alpha,\beta;\gamma,\delta;\kappa)$ if and only if
\begin{equation}\label{eq:1}
\begin{array}{c}
\alpha,\beta,\gamma,\delta,\kappa\geq 0, \alpha+\beta>0,\\
0<\delta+\gamma \leq \beta+\kappa \text{\rm{ and }} \kappa\leq \rm{min}(\alpha,\gamma).
\end{array}
\end{equation}

The definition of the duality for $\mathbb{Z}_p\mathbb{Z}_{p^k}$-additive codes is shown in \cite{SWD}, which includes $\mathbb{Z}_p\mathbb{Z}_{p^2}$-additive codes. More detail are suggested to see \cite{SWD}.

\section{Rank of $\mathbb{Z}_p\mathbb{Z}_{p^2}$-additive codes}\label{sec:4}
The following formula for the sum of words of $\mathbb{Z}^\alpha_p\times \mathbb{Z}^\beta_{p^2}$ under the Gray map is useful, see, e.g., \cite{SWL}.
\begin{lem}\label{P1}
For all $\mathbf{u}$, $\mathbf{v}\in \mathbb{Z}^\alpha_p\times \mathbb{Z}^\beta_{p^2}$, $\mathbf{u}=(u_1,\ldots,u_{\alpha+\beta})$, $\mathbf{v}=(v_1,\ldots,v_{\alpha+\beta})$, we have
$$\Phi(\mathbf{u}+\mathbf{v})=\Phi(\mathbf{u})+\Phi(\mathbf{v})+\Phi(pP(\mathbf{u},\mathbf{v})),$$
where $P(\mathbf{u},\mathbf{v})=(0,\ldots,0,P(u_{\alpha+1},v_{\alpha+1}),\ldots,P(u_{\alpha+\beta},v_{\alpha+\beta}))$ and
$$pP(u_i,v_i)=pP(u'_i,v'_i)=\left\{
                            \begin{array}{ll}
                              p & \hbox{if $u'_i+v'_i\geq p$} \\
                              0 & \hbox{otherwise,}
                            \end{array}
                          \right.
u_i=u''_ip+u'_i,\ v_i=v''_ip+v'_i.
$$
\end{lem}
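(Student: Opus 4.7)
The plan is to reduce the identity to a single $\mathbb{Z}_{p^2}$-coordinate and then perform a case analysis on whether the low-order $p$-ary digits produce a carry. Since $\Phi$ acts as the identity on the first $\alpha$ coordinates and $pP(\mathbf{u},\mathbf{v})$ vanishes there by definition, the identity is trivial on the $\mathbb{Z}_p$-part. On the remaining $\beta$ coordinates, $\Phi$ applies $\phi$ independently in each coordinate, so it suffices to prove $\phi(u+v)=\phi(u)+\phi(v)+\phi(pP(u,v))$ for a single pair $u,v\in\mathbb{Z}_{p^2}$.

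To carry this out, I would first write out the Gray map explicitly for $s=2$: with $Y=(0,1,\ldots,p-1)$, if $u=u''p+u'$ with $u',u''\in\mathbb{Z}_p$, then $\phi(u)=(u'',u'',\ldots,u'')+u'(0,1,\ldots,p-1)\pmod{p}$. Adding term by term yields $\phi(u)+\phi(v)=(u''+v'',\ldots,u''+v'')+(u'+v')(0,1,\ldots,p-1)\pmod{p}$. The key step is then to split into the two cases that mirror the definition of $pP(u,v)$, according to whether $u'+v'\geq p$ or not.

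If $u'+v'<p$, no carry arises in $\mathbb{Z}_{p^2}$, so the $p$-ary digits of $u+v$ are exactly $u'+v'$ and $u''+v''\bmod p$; thus $\phi(u+v)$ equals the sum above and the claim reduces to $pP(u,v)=0$, so $\phi(pP(u,v))=0$. If $u'+v'\geq p$, a carry forces $u+v=(u''+v''+1)p+(u'+v'-p)$ in $\mathbb{Z}_{p^2}$, hence $\phi(u+v)=(u''+v''+1,\ldots)+(u'+v'-p)(0,1,\ldots,p-1)\pmod{p}$; since $-p(0,1,\ldots,p-1)\equiv 0\pmod{p}$, this simplifies to $\phi(u)+\phi(v)+(1,1,\ldots,1)$. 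In this branch $pP(u,v)=p$, and evaluating $\phi$ at $p=0+1\cdot p$ gives $\phi(p)=(1,1,\ldots,1)$, matching the extra term exactly.

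The main (mild) obstacle is the $p$-ary digit bookkeeping, in particular verifying that the all-ones vector produced in the overflow case really is $\phi(p)$ rather than some permuted or shifted variant; once this single identification is made, the rest is direct substitution. Notice also that the argument is insensitive to the chosen ordering of the columns of $Y$, since the carry contribution only depends on $\phi(p)$, which is the constant vector regardless of the order. Assembling the coordinate-wise identities over all $\alpha+\beta$ positions then gives the lemma.
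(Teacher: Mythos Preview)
Your argument is correct: the coordinate-wise reduction is valid because $\Phi$ acts as the identity on the $\mathbb{Z}_p$-part and applies $\phi$ independently on each $\mathbb{Z}_{p^2}$-coordinate, and your two-case carry analysis for $\phi(u+v)$ is accurate, including the identification $\phi(p)=(1,\ldots,1)$.

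Note, however, that the paper does not actually supply its own proof of this lemma; it is quoted as a known fact from \cite{SWL}. So there is no ``paper's approach'' to compare against beyond that citation. Your write-up therefore adds value as a self-contained verification. One small point worth tightening in a final version: when you say ``the $p$-ary digits of $u+v$ are exactly $u'+v'$ and $u''+v''\bmod p$,'' you are silently using that overflow in the top digit is harmless because $\phi$ only sees $u_1\bmod p$; you already handle this implicitly via the ``$\pmod p$'' annotations, but making it explicit would remove any ambiguity.
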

The following expression gives a more useful method to calculate the value of $pP(\mathbf{u},\mathbf{v})$.
\begin{lem}\label{P2}
For any $u$, $v\in \mathbb{Z}_p$ or $u$, $v\in \mathbb{Z}_{p^2}$, we have
$$pP(u,v)=p\sum_{k=1}^{p-1}\sum_{l=p-k}^{p-1}\frac{\prod_{m=0}^{p-1}(u-m)(v-m)}{(u-k)(v-l)},$$
where $P(*,*)$ is defined in Lemma \ref{P1}.
\end{lem}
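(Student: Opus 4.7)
The plan is to verify the identity by observing that the factor $p$ in front reduces everything to a computation modulo $p$, and then using a short Lagrange/Wilson argument to evaluate the double sum. Writing $u = u''p + u'$ and $v = v''p + v'$ with $u',v' \in \{0,1,\ldots,p-1\}$, Lemma \ref{P1} says that $pP(u,v)$ equals $p$ in $\Z_{p^2}$ when $u'+v' \ge p$ and equals $0$ otherwise. Since the entire right-hand side is $p$ times a sum, it suffices to show that this inner sum is congruent to $1 \pmod p$ in the carry case and to $0 \pmod p$ otherwise.

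First I would rewrite each summand as $Q_k(u)\,Q_l(v)$, where $Q_k(x) := \prod_{m=0,\, m\ne k}^{p-1}(x-m)$; interpreting the ``division'' as cancellation of factors is necessary because $u-k$ can be a zero-divisor in $\Z_{p^2}$. Over $\F_p$ one has the identity $x^p - x = \prod_{m=0}^{p-1}(x-m)$, and the Lagrange basis polynomial $L_k(x) = \prod_{m \ne k}\tfrac{x-m}{k-m}$ satisfies $L_k(j) = \delta_{j,k}$ for every $j \in \F_p$. By Wilson's theorem, $\prod_{m \ne k}(k-m) \equiv (p-1)! \equiv -1 \pmod p$, which yields $Q_k(u) \equiv -L_k(u') \equiv -\delta_{u',k} \pmod p$, and similarly $Q_l(v) \equiv -\delta_{v',l} \pmod p$.

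Substituting back, the double sum reduces modulo $p$ to
\[
\sum_{k=1}^{p-1}\sum_{l=p-k}^{p-1}\delta_{u',k}\,\delta_{v',l},
\]
which is the indicator that $(u',v')$ lies in the summation region $\{(k,l):1\le k\le p-1,\ p-k\le l\le p-1\}$. This region is precisely $\{(u',v'):u',v'\in\{1,\ldots,p-1\},\ u'+v'\ge p\}$, and the side conditions $u',v'\ge 1$ are automatic from $u'+v'\ge p$: if $u'=0$ or $v'=0$ then $u'+v'<p$ as well. Multiplying through by $p$ then recovers the value prescribed by Lemma \ref{P1}.

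I do not anticipate any genuine obstacle; the one thing to handle cleanly is the mod-$p^2$ versus mod-$p$ bookkeeping, which is settled immediately by the leading factor of $p$, so the Lagrange--Wilson step can be carried out inside $\F_p$. The case $u,v\in\Z_p$ reduces to the same computation after lifting to integer representatives in $\{0,\ldots,p-1\}$, so no separate treatment is needed.
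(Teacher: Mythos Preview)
Your proposal is correct and takes essentially the same approach as the paper: both rely on Lagrange interpolation over $\F_p$ to recognise the double sum as the indicator of the carry region. The paper's proof is a single sentence invoking the Lagrange interpolation formula at all grid points $(u,v)$, while you have spelled out the details (the factor $p$ allowing reduction mod $p$, the Wilson-theorem normalisation of the basis polynomials, and the identification of the index set with $\{u'+v'\ge p\}$); there is no substantive difference in strategy.
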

\begin{proof}
It is a straightforward result from the Lagrange interpolation formula in \cite{Lac} with all interpolation points $(u,v)$, by Lemma \ref{P1}.
\end{proof}
By Lemma \ref{P2}, it is easy to observe that $pP(u,v)$ is a combination of the terms $u^iv^j(0\leq i,j\leq p-1)$. Next, we will figure out that the coefficient of each term
equal to zero or not.

Let $x_j=j$ with $0\leq j \leq p-1$. Denote $\sum\limits_{0\leq i_1\leq\cdots \leq i_j\leq p-1}x_{i_1}\cdots x_{i_j}$ by $\sigma_j(x_0,x_1,\ldots,x_{p-1})$ with $1\leq j \leq p-1$. For simplicity, denote $\sigma_j(x_0,x_1,\ldots,x_{p-1})$ by $\sigma_j$. We call that all $\sigma_j$ are symmetric polynomials, since $\sigma_j$ is invariant under any permutation of $x_0$,$\ldots$,$x_{p-1}$. It is easy to check that $\sigma_j\equiv 0$ mod $p$ with $1\leq j\leq p-1$ and $\sigma_{p-1}\equiv 1$ mod $p$. By using $\sigma_j$, define the following symmetric polynomial:
$$\sigma_i(\widehat{x_k})=\sigma_i(x_0,x_1,\ldots,x_{k-1},x_{k+1},\ldots,x_{p-1}),$$
where $0\leq k\leq p-1$. By using $\sigma_i(\widehat{x_k})$, define the following polynomial:
$$s(i,j)=\sum_{k=1}^{p-1}\sum_{l=p-k}^{p-1}(-1)^{i+j}\sigma_i(\widehat{x_k})\sigma_j(\widehat{x_l}),$$
where $0\leq i,j\leq p-2$.
\begin{lem}\label{sij}
For any $i$, $j\in \{0,1,\ldots,p-2\}$, we have
$$s(i,j)\equiv\sum_{k=1}^{p-1}\sum_{l=p-k}^{p-1}k^il^j \ \mbox{mod}\ p.$$
Moreover, let $T=\{(i,j)|i+j=p-1,1\leq i\leq p-2\}\cup\{(i,j)|i+j=p+2(m-1),0\leq m\leq \frac{p-3}{2},2m\leq i\leq p-2\}$, then
$$s(i,j)\ \mbox{mod}\ p\left\{
          \begin{array}{ll}
            \neq 0 & \hbox{if $(i,j)\in T$} \\
            = 0& \hbox{otherwise}
          \end{array}
        \right..
$$
\end{lem}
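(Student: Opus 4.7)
The plan is to reduce part (1) to a one-variable polynomial identity and part (2) to a case analysis based on Faulhaber's formula. The key identity for part (1) is
\[
(-1)^i \sigma_i(\widehat{x_k}) \equiv k^i \pmod p \quad \text{for } 0 \leq i \leq p-2,\ 1 \leq k \leq p-1.
\]
By Fermat's little theorem, $\prod_{m=0}^{p-1}(X - m) \equiv X^p - X \pmod p$. Dividing both sides by $X - k$ using $(X^p - k^p)/(X - k) = X^{p-1} + kX^{p-2} + \cdots + k^{p-1}$ together with $k^p \equiv k$ yields
\[
\prod_{m \neq k}(X - m) \equiv X^{p-1} + kX^{p-2} + \cdots + k^{p-2}X + (k^{p-1}-1) \pmod p.
\]
Matching this with $\prod_{m \neq k}(X - m) = \sum_{i=0}^{p-1}(-1)^i \sigma_i(\widehat{x_k}) X^{p-1-i}$ gives the identity. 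Part (1) then follows because $(-1)^{i+j}\sigma_i(\widehat{x_k})\sigma_j(\widehat{x_l}) \equiv k^i l^j \pmod p$.

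For part (2), set $S(i,j) = \sum_{k=1}^{p-1}\sum_{l=p-k}^{p-1} k^i l^j$ and substitute $l = p - t$ with $t \in \{1,\ldots,k\}$ to obtain
\[
S(i,j) \equiv (-1)^j \sum_{k=1}^{p-1} k^i F_j(k) \pmod p, \qquad F_j(k) = \sum_{t=1}^k t^j.
\]
Since $F_j(0) = 0$, the outer sum extends to all $k \in \mathbb{F}_p$. By Faulhaber's formula, $F_j$ is a polynomial in $k$ of degree $j+1$ whose coefficient of $k^s$ is $\binom{j+1}{s} B_{j+1-s}/(j+1)$, where $B_r$ is the $r$-th Bernoulli number. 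Applying the orthogonality relation $\sum_{k \in \mathbb{F}_p} k^r \equiv -1 \pmod p$ when $(p-1) \mid r$ and $r \geq 1$ (and $\equiv 0$ otherwise), together with the bound $i + (j+1) \leq 2p - 3 < 2(p-1)$, leaves only the power $k^{p-1}$ as a contribution. Thus $S(i,j)$ reduces, up to an overall sign, to the coefficient of $k^{p-1-i}$ in $F_j(k)$.

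The main obstacle is the concluding case analysis. Setting $s = p-1-i$ and $r = j+1-s = i+j+2-p$, the coefficient $\binom{j+1}{s}B_r/(j+1)$ is $p$-integral (by $j \leq p-2$ and the von Staudt--Clausen theorem applied to $r \leq p - 3$), and vanishes modulo $p$ exactly when $B_r = 0$. Recalling $B_0 = 1$, $B_1 = -1/2$, $B_{2m} \neq 0$ for $m \geq 1$, and $B_{2m+1} = 0$ for $m \geq 1$, the nonvanishing pairs $(i,j)$ with $0 \leq i,j \leq p-2$ fall into three families: $i + j = p - 2$ (the case $r = 0$); $i + j = p - 1$ with $1 \leq i \leq p-2$ (the case $r = 1$, where the upper bound on $j$ forces $i \geq 1$); and $i + j = p - 2 + 2m$ with $1 \leq m \leq (p-3)/2$ and $2m \leq i \leq p-2$ (the case $r = 2m \geq 2$, where the upper bound on $j$ forces $i \geq 2m$ and the bound $i \leq p-2$ caps $m$). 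These three families are exactly the two clauses defining $T$, completing the proof.
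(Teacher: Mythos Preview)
Your argument for part (1) is correct and reaches the same key identity $(-1)^i\sigma_i(\widehat{x_k}) \equiv k^i \pmod p$ as the paper; the paper derives it from the recursion $\sigma_j = \sigma_j(\widehat{x_l}) + x_l\sigma_{j-1}(\widehat{x_l})$, while you factor $X^p - X$ over $\mathbb{F}_p$ and match coefficients. Your reduction in part (2) via the substitution $l = p - t$ and Faulhaber's formula is also sound and far more explicit than the paper's one-line ``by Bernoulli numbers we can discuss the other cases.''

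There is, however, a genuine gap at the last step. You assert that the coefficient $\binom{j+1}{s}B_r/(j+1)$ ``vanishes modulo $p$ exactly when $B_r = 0$.'' What your reduction actually shows is that it vanishes modulo $p$ exactly when $B_r \equiv 0 \pmod p$, and for even $r = 2m$ with $1 \le m \le (p-3)/2$ the implication $B_{2m} \neq 0 \Rightarrow B_{2m} \not\equiv 0 \pmod p$ is precisely Kummer's criterion for $p$ to be a \emph{regular} prime. This fails for irregular primes: for instance $p = 37$ divides the numerator of $B_{32}$, so taking $r = 32$ (say $i = 33$, $j = 34$, both in $\{0,\ldots,35\}$) your own computation yields $s(i,j) \equiv 0 \pmod{37}$ even though $(i,j) \in T$. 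Thus the direction $(i,j)\in T \Rightarrow s(i,j)\not\equiv 0$ is not established---and is in fact false---for irregular $p$. The paper's proof shares this defect (it offers no argument beyond the boundary cases $i=0$ or $j=0$), so the lemma as stated holds only for regular primes.
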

\begin{proof}
It is enough to prove that $\sigma_j(\widehat{x_l})\equiv (-l)^j$ mod $p$ for $0\leq j\leq p-2$ and $1\leq l\leq p-1$. When $j=0$, the equation holds. When $1\leq j\leq p-2$, the equation holds from $\sigma_j=\sigma_j(\widehat{x_l})+x_l\sigma_{j-1}(\widehat{x_l})\equiv 0$ mod $p$ and $\sigma_1(\widehat{x_l})\equiv-x_l$ mod $p$.

If $i=j=0$, then $s(0,0)\equiv\sum_{k=1}^{p-1}\sum_{l=p-k}^{p-1}1=\sum_{k=1}^{p-1}k\equiv 0$ mod $p$. If $i=0$ and $j\neq 0$, then $s(0,j)\equiv\sum_{k=1}^{p-1}\sum_{l=p-k}^{p-1}l^j=\sum_{l=1}^{p-1}l^{j+1}$ mod $p$. It is easy to see that $s(0,j)\equiv 0$ mod $p$ with $1\leq j\leq p-3$ and $s(0,p-2)\equiv p-1$ mod $p$. Similarly, $s(i,0)\equiv 0$ mod $p$ with $1\leq i\leq p-3$ and $s(p-2,0)\equiv p-1$ mod $p$. By Bernoulli numbers in \cite{Bn}, we can discuss the other cases.
\end{proof}
\begin{lem}\label{P3}
For any $u$, $v\in \mathbb{Z}_{p^2}$, we have
\begin{eqnarray*}
 pP(u,v) &=& p\sum_{k=1}^{p-1}\sum_{l=p-k}^{p-1}\frac{\prod_{m=0}^{p-1}(u-m)(v-m)}{(u-k)(v-l)}\\
&=& p\bigg(\sum_{i=0}^{p-2}s(i,p-2-i)u^{p-1-i}v^{i+1}+\sum_{i=1}^{p-2}s(i,p-1-i)u^{p-1-i}v^{i}\\
&&+\sum_{j=1}^{\frac{p-3}{2}}\big(\sum_{i=2j}^{p-2}s(i,p+2j-i-2)u^{p-1-i}v^{i-2j+1}\big)\bigg).
\end{eqnarray*}
\end{lem}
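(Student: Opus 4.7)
The plan is to start from the formula in Lemma \ref{P2} and expand the rational-function summand
$\frac{\prod_{m=0}^{p-1}(u-m)(v-m)}{(u-k)(v-l)}=\prod_{m\neq k}(u-m)\cdot\prod_{m\neq l}(v-m)$
as a polynomial of bidegree $(p-1,p-1)$ in $(u,v)$. Expanding each factor via elementary symmetric polynomials gives
$$\prod_{m\neq k}(u-m)=\sum_{i=0}^{p-1}(-1)^i\sigma_i(\widehat{x_k})\,u^{p-1-i},$$
and similarly for the $v$-factor. Plugging this in, multiplying out, and interchanging the $(k,l)$-sum with the $(i,j)$-sum collapses the inner summation into the quantity $s(i,j)$ defined just before Lemma \ref{sij}, yielding
$$pP(u,v)=p\sum_{i=0}^{p-1}\sum_{j=0}^{p-1}s(i,j)\,u^{p-1-i}v^{p-1-j}.$$

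Next I would eliminate the boundary index $i=p-1$ (and, by symmetry, $j=p-1$): the factor $\sigma_{p-1}(\widehat{x_k})$ is the product over $\{0,1,\ldots,p-1\}\setminus\{k\}$, and because the outer sum runs over $k\geq 1$, the value $0$ always remains in the product, forcing $\sigma_{p-1}(\widehat{x_k})=0$ and hence $s(p-1,j)=s(i,p-1)=0$. This restricts the summation to $0\leq i,j\leq p-2$, which is precisely the range in which Lemma \ref{sij} describes $s(i,j)\bmod p$.

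Because the whole expression is multiplied by $p$ inside $\mathbb{Z}_{p^2}$, only the residue $s(i,j)\bmod p$ matters, so by Lemma \ref{sij} I may restrict the sum to $(i,j)\in T$. I would then partition $T$ by the value of $i+j$. The slice $i+j=p-2$ (the $m=0$ piece of $T$) produces the first sum in the statement via $v^{p-1-(p-2-i)}=v^{i+1}$ with $0\leq i\leq p-2$; the slice $i+j=p-1$ produces the middle sum via $v^{p-1-(p-1-i)}=v^i$ with $1\leq i\leq p-2$; and the slices $i+j=p-2+2m$ for $m=1,\ldots,(p-3)/2$ (renamed to $j$ in the statement) produce the final double sum via $v^{p-1-(p-2+2m-i)}=v^{i-2m+1}$ with $2m\leq i\leq p-2$.

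The main obstacle will be the index bookkeeping in the last step: verifying that the three pieces of $T$ match exactly the three pieces of the claimed formula, that the ranges $2m\leq i\leq p-2$ and the exponents of $v$ line up, and that the reindexing $m\mapsto j$ is carried out consistently. Everything else reduces to a routine polynomial expansion followed by the algebraic identity $p\cdot s(i,j)=0$ in $\mathbb{Z}_{p^2}$ whenever $s(i,j)\equiv 0\bmod p$.
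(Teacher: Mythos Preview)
Your proposal is correct and is exactly the argument the paper has in mind: the paper's own proof is the single line ``The result follows from Lemma \ref{sij},'' and the machinery you describe---expanding $\prod_{m\neq k}(u-m)$ via the $\sigma_i(\widehat{x_k})$, collapsing the $(k,l)$-sum into $s(i,j)$, discarding the $i=p-1$ and $j=p-1$ terms because $0$ sits in the product, and then invoking Lemma \ref{sij} to isolate the surviving $(i,j)\in T$---is precisely what that one line abbreviates. Your index bookkeeping (the three slices of $T$ matching the three displayed sums) is accurate.
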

\begin{proof}
The result follows from Lemma \ref{sij}.
\end{proof}

Let $\mathcal{C}$ be a $\mathbb{Z}_p\mathbb{Z}_{p^2}$-additive code of type $(\alpha,\beta;\gamma,\delta;\kappa)$ and $C=\Phi(\mathcal{C})$ of length $\alpha+p\beta$. We have known the definition of $rank(C)$ before. In this section, for general prime $p$, we give the range of values $r=rank(C)$ and prove that there is a $\mathbb{Z}_p\mathbb{Z}_{p^2}$-linear code of type $(\alpha,\beta;\gamma,\delta;\kappa)$ with $r=rank(C)$ for any positive integer $r$.

For $\mathbf{u}$, $\mathbf{v}\in \mathbb{Z}^\alpha_p\times \mathbb{Z}^\beta_{p^2}$, denote $(u_1v_1,\ldots,u_\alpha v_\alpha,u_{\alpha+1}v_{\alpha+1},\ldots,u_{\alpha+\beta}v_{\alpha+\beta})$ by $\mathbf{u}*\mathbf{v}$. Let $\mathbf{u}^m=\underbrace{\mathbf{u}*\cdots*\mathbf{u}}_m$, where $m$ is a positive integer.
\begin{thm}\label{base}
Let $\mathcal{C}$ be a $\mathbb{Z}_p\mathbb{Z}_{p^2}$-additive code of type $(\alpha,\beta;\gamma,\delta;\kappa)$ which satisfies (1), $C=\Phi(\mathcal{C})$ be the corresponding $\mathbb{Z}_p\mathbb{Z}_{p^2}$-linear code of length $n=\alpha+p\beta$.
\begin{enumerate}
\item [(i)] Let $\mathcal{G}$ be the generator matrix of $\mathcal{C}$, and let $\{\mathbf{u}_i\}_{i=1}^\gamma$, $\{\mathbf{v}_j\}_{j=1}^\delta$ be the rows of order $p$ and $p^2$ in $\mathcal{G}$, respectively. Denote $\{1+2m|1\leq m\leq \frac{p-1}{2}\}\cup \{p-1\}$ by $L$. Then, $\langle C\rangle$ is generated by $\{\Phi(\mathbf{u}_i)\}_{i=1}^\gamma$, $\{\Phi(\mathbf{v}_j)\}_{j=1}^\delta$ and $\{\Phi(p\mathbf{v}_{i_1}*\cdots*\mathbf{v}_{i_l})\}_{1\leq i_1\leq\cdots\leq i_l\leq \delta}$ with $l\in L$.
\item [(ii)] $rank(C)\in \left\{\gamma+2\delta,\ldots,\min\bigg(\beta+\gamma+\kappa,\gamma+\delta+\sum_{l\in L}{\delta+l-1 \choose l}\bigg)\right\}$. Let $rank(C)=r=\gamma+2\delta+\overline{r}$. Then $\overline{r}\in \left\{0,1,\ldots,\min\bigg(\beta-(\gamma-\kappa)-\delta,\sum_{l\in L}{\delta+l-1 \choose l}-\delta\bigg)\right\}$.
\item [(iii)] The linear code $\langle C\rangle$ over $\mathbb{Z}_p$ is $\mathbb{Z}_p\mathbb{Z}_{p^2}$-linear.
\end{enumerate}
\end{thm}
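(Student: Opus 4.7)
The plan is to iterate Lemma \ref{P1} in order to track how $\Phi$ interacts with addition on $\mathcal{C}$, and then to use Lemma \ref{P3} to identify the extra generators forced by the correction terms.

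For part (i), I would start from a generic codeword $\mathbf{c}=\sum_{i=1}^{\gamma}\lambda_i\mathbf{u}_i+\sum_{j=1}^{\delta}\nu_j\mathbf{v}_j$. The key simplification is that every $\mathbf{u}_i$ has order $p$, so its $\mathbb{Z}_{p^2}$-coordinates lie in $p\mathbb{Z}_{p^2}$; their low digits vanish and Lemma \ref{P1} forces $pP(\mathbf{u}_i,\cdot)=0$, so the $\mathbf{u}_i$-sum passes through $\Phi$ linearly without any corrections. The task thus reduces to computing $\Phi(\sum_j\nu_j\mathbf{v}_j)$ by iterated addition. At each step a correction $\Phi(pP(\mathbf{a},\mathbf{b}))$ is created, and Lemma \ref{P3} rewrites it as a $\mathbb{Z}_p$-combination of $\Phi(p\mathbf{a}^i*\mathbf{b}^j)$ with $(i,j)\in T$. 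Reading the total degrees $i+j$ off the three sums in Lemma \ref{P3} produces exactly the set $\{p,p-1,p-2,p-4,\ldots,3\}=L$. After multinomial expansion in the $\mathbf{v}_k$, every correction becomes a $\mathbb{Z}_p$-combination of $\Phi(p\mathbf{v}_{i_1}*\cdots*\mathbf{v}_{i_l})$ with $l\in L$, yielding the asserted generating set.

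For part (ii), the lower bound $rank(C)\geq\gamma+2\delta$ is immediate from $|C|=p^{\gamma+2\delta}$ and $C\subseteq\langle C\rangle$. Directly counting the generators given by part (i) yields $rank(C)\leq\gamma+\delta+\sum_{l\in L}\binom{\delta+l-1}{l}$, equivalently $\overline{r}\leq\sum_{l\in L}\binom{\delta+l-1}{l}-\delta$. For the structural upper bound I would invoke part (iii) to write $\langle C\rangle=\Phi(\mathcal{D})$ for a $\mathbb{Z}_p\mathbb{Z}_{p^2}$-additive code $\mathcal{D}$ of some type $(\alpha,\beta;\gamma^*,\delta^*;\kappa^*)$. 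Since the extra generators of $\mathcal{D}$ all have order $p$ and zero $\mathbb{Z}_p$-component, one reads off $\delta^*=\delta$ and $\kappa^*=\kappa$, and condition \eqref{eq:1} applied to $\mathcal{D}$ yields $\gamma^*+\delta\leq\beta+\kappa$, so $rank(C)=\gamma^*+2\delta\leq\beta+\delta+\kappa$, matching $\overline{r}\leq\beta-(\gamma-\kappa)-\delta$.

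For part (iii), I would take $\mathcal{D}$ to be the $\mathbb{Z}_p\mathbb{Z}_{p^2}$-additive code generated by the rows of $\mathcal{G}$ together with $\{p\mathbf{v}_{i_1}*\cdots*\mathbf{v}_{i_l}:l\in L,\ 1\leq i_1\leq\cdots\leq i_l\leq\delta\}$. The inclusion $\langle C\rangle\subseteq\Phi(\mathcal{D})$ is essentially the content of part (i). For the reverse, it suffices to show that $\Phi(\mathcal{D})$ is already $\mathbb{Z}_p$-linear. Closure under addition reduces by Lemma \ref{P1} to checking that $pP(\mathbf{a},\mathbf{b})\in\mathcal{D}$ for all $\mathbf{a},\mathbf{b}\in\mathcal{D}$; since $pP$ depends only on the mod-$p$ reductions of its arguments (and the $\mathbf{u}_i$ and the extras contribute nothing modulo $p$), one expands $\mathbf{a}\equiv\sum_k a_k\mathbf{v}_k$ and $\mathbf{b}\equiv\sum_k b_k\mathbf{v}_k\pmod{p}$, and then Lemma \ref{P3} exhibits $pP(\mathbf{a},\mathbf{b})$ as a $\mathbb{Z}_p$-combination of the generators $p\mathbf{v}_{i_1}*\cdots*\mathbf{v}_{i_l}$ with $l\in L$. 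Closure under scalar multiplication follows by iterating addition. Therefore $\Phi(\mathcal{D})=\langle C\rangle$ and $\langle C\rangle$ is $\mathbb{Z}_p\mathbb{Z}_{p^2}$-linear. The main obstacle I anticipate is the combinatorial bookkeeping when $pP$ is iterated against three or more summands: one must confirm that every nested correction still has total degree in $L$, which is exactly what the vanishing pattern in Lemma \ref{sij} guarantees.
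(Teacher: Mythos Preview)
Your proposal is correct and follows essentially the same route as the paper: separate off the order-$p$ part via Lemma~\ref{P1}, use Lemma~\ref{P3} to identify the correction terms as $\Phi(p\mathbf{v}_{i_1}*\cdots*\mathbf{v}_{i_l})$ with $l\in L$, count generators for the upper bound, and realize $\langle C\rangle$ as $\Phi$ of the additive code $\mathcal{D}$ obtained by adjoining those order-$p$ products. If anything, your treatment of (iii) and of the structural bound in (ii) is more explicit than the paper's, which simply asserts that $\Phi(\mathcal{S}_{\mathcal{C}})=\langle C\rangle$ and that the $\beta$-bound is ``trivial by the form of $\mathcal{G}_S$''.
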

\begin{proof}

\begin{enumerate}
\item [(i)] Let $\mathbf{c}\in \mathcal{C}$, without loss of generality, $\mathbf{c}$ can be expressed as $\mathbf{c}=\sum_{j=1}^\zeta \mathbf{v}_j+\omega$, where $\zeta\leq \delta$ and $\omega$ is a codeword in $\mathcal{C}$ of order $p$. By Lemma \ref{P1}, we have $\Phi(\mathbf{c})=\Phi(\sum_{j=1}^\zeta \mathbf{v}_j)+\Phi(\omega)$, where $\Phi(\omega)$ is a linear combination of $\{\Phi(\mathbf{u}_i)\}_{i=1}^\gamma$, $\{\Phi(p\mathbf{v}_j)\}_{j=1}^\delta$. And $\Phi(\sum_{j=1}^\zeta \mathbf{v}_j)=\sum_{j=1}^\zeta\Phi(\mathbf{v}_j)+\sum_{l\in L}\sum_{1\leq i_1\leq\cdots\leq i_l\leq \zeta}(\tilde{a}_l\Phi(p\mathbf{v}_{i_1}*\cdots*\mathbf{v}_{i_l}))$, where $\tilde{a}_l\in \mathbb{Z}_p$ and $L=\{1+2m|1\leq m\leq \frac{p-1}{2}\}\cup \{p-1\}$. By Lemma \ref{P3}, $\Phi(\mathbf{c})$ is generated by $\{\Phi(\mathbf{u}_i)\}_{i=1}^\gamma$, $\{\Phi(\mathbf{v}_j)\}_{j=1}^\delta$ and $\{\Phi(p\mathbf{v}_{i_1}*\cdots*\mathbf{v}_{i_l})\}_{1\leq i_1\leq\cdots\leq i_l\leq \delta}$ with $l\in L$.
\item [(ii)] The bound $\gamma+\delta+\sum_{l\in L}{\delta+l-1 \choose l}$ is straightforward by $(i)$, and the $\beta+\gamma+\kappa$ is trivial by the form of $\mathcal{G}_S$ in Theorem \ref{th:1}.
\item [(iii)] Let $\mathcal{S}_{\mathcal{C}}$ be a $\mathbb{Z}_p\mathbb{Z}_{p^2}$-additive code generated by $\{\mathbf{u}_i\}_{i=1}^\gamma$, $\{\mathbf{v}_j\}_{j=1}^\delta$ and $\{p\mathbf{v}_{i_1}*\cdots*\mathbf{v}_{i_l}\}_{1\leq i_1\leq\cdots\leq i_l\leq \delta}$ with $l\in L$. It is obvious that $\mathcal{S}_{\mathcal{C}}$ is of type $(\alpha,\beta;\gamma+\overline{r},\delta;\kappa)$ and $\Phi(\mathcal{S}_{\mathcal{C}})=\langle C\rangle$.
\end{enumerate}
\end{proof}

Let $a,b$ be two positive integers. Denote combinatorial number by ${a \choose b}$ whose value is $\frac{a!}{b!(a-b)!}$ if $a\geq b$ and is zero if $a<b$, respectively.
\begin{lem}\label{nck}
If $1\leq i\leq k$, then we have that
$${k \choose i}=\sum_{j=i-1}^{k-1}{j \choose i-1}.$$
\end{lem}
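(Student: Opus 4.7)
The identity to prove is the classical hockey-stick (or Christmas-stocking) identity, so the strategy is standard. My plan is to give a short telescoping argument using Pascal's rule, with induction as a fallback if a cleaner write-up is wanted.

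The plan is to rewrite each summand on the right-hand side as a difference via Pascal's identity. Specifically, Pascal's rule gives $\binom{j+1}{i} = \binom{j}{i} + \binom{j}{i-1}$, hence $\binom{j}{i-1} = \binom{j+1}{i} - \binom{j}{i}$ for every $j \geq i-1$. Substituting this into $\sum_{j=i-1}^{k-1} \binom{j}{i-1}$ produces a telescoping sum whose value collapses to $\binom{k}{i} - \binom{i-1}{i}$. Since $i-1 < i$, the last term $\binom{i-1}{i}$ is zero by the convention recalled just before the lemma, and we are done.

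Alternatively, one can argue by induction on $k \geq i$ with $i$ fixed. The base case $k = i$ reduces to $\binom{i}{i} = \binom{i-1}{i-1} = 1$. For the inductive step, apply Pascal's rule $\binom{k}{i} = \binom{k-1}{i} + \binom{k-1}{i-1}$, use the inductive hypothesis to replace $\binom{k-1}{i}$ by $\sum_{j=i-1}^{k-2} \binom{j}{i-1}$, and absorb the remaining $\binom{k-1}{i-1}$ as the $j=k-1$ term to obtain the claim. A third, purely combinatorial, proof classifies the $i$-element subsets of $\{1,2,\ldots,k\}$ by their largest element $j+1 \in \{i,i+1,\ldots,k\}$: the remaining $i-1$ elements are chosen arbitrarily from $\{1,\ldots,j\}$, so the total count is $\sum_{j=i-1}^{k-1}\binom{j}{i-1}$, which equals $\binom{k}{i}$.

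There is essentially no obstacle here: the only minor subtlety is handling the boundary term $\binom{i-1}{i}$ in the telescoping approach, which the paper's explicit convention that $\binom{a}{b}=0$ for $a<b$ takes care of. I would present the telescoping proof since it is the shortest and directly uses only Pascal's rule plus the stated convention.
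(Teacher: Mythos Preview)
Your proof is correct and matches the paper's approach: the paper's one-line proof simply invokes Pascal's rule $\binom{k}{i}=\binom{k-1}{i-1}+\binom{k-1}{i}$ and says the identity is easy to check from it, which is exactly your telescoping/induction argument spelled out in detail.
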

\begin{proof}
It is easy to check it by ${k \choose i}={k-1 \choose i-1}+{k-1 \choose i}$, where $1\leq i\leq k$.
\end{proof}
The following result is straightforward by Lemma \ref{nck}.
\begin{lem}\label{nckk}
The number $\sum_{l\in L}{\delta+l-1 \choose l}=\sum_{m=0}^{\frac{p-1}{2}}{\delta+2m \choose 2m+1}+{\delta+p-2 \choose p-1}$ can be represented as following
$$\sum_{i=0}^{p-2}\bigg(\sum_{m=0}^{\frac{p-1-\overline{i}}{2}}{\overline{i}+2m \choose i}+{p-2 \choose i}\bigg){\delta \choose i+1}+{\delta \choose p},$$
where $\overline{i}=\left\{
                       \begin{array}{ll}
                         i, & \hbox{i is even;} \\
                         i+1, & \hbox{i is odd.}
                       \end{array}
                     \right.$
\end{lem}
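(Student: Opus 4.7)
The plan is to reduce the claim to a computation of coefficients after writing every binomial on the left-hand side in the basis $\{\binom{\delta}{j+1}\}_j \cup \{\binom{\delta}{p}\}$. The workhorse is Vandermonde's convolution, which follows from Lemma \ref{nck} by induction (or can be used directly). Applying it gives, for each $0 \leq m \leq (p-1)/2$,
\[
\binom{\delta+2m}{2m+1} \;=\; \sum_{j=0}^{2m}\binom{2m}{j}\binom{\delta}{j+1},
\qquad
\binom{\delta+p-2}{p-1} \;=\; \sum_{j=0}^{p-2}\binom{p-2}{j}\binom{\delta}{j+1}.
\]

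Next, I would swap the order of summation on the left-hand side to collect the coefficient of each $\binom{\delta}{j+1}$. The term $\binom{\delta}{p}$ appears only in the case $m=(p-1)/2$, $j=p-1$ of the first expansion, contributing $\binom{p-1}{p-1}=1$, which matches the isolated $\binom{\delta}{p}$ on the right. For $0 \leq j \leq p-2$ the coefficient of $\binom{\delta}{j+1}$ becomes
\[
\sum_{m=\lceil j/2\rceil}^{(p-1)/2}\binom{2m}{j} \;+\; \binom{p-2}{j},
\]
since the constraint $j \leq 2m$ forces $m \geq \lceil j/2\rceil$.

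Finally, I would perform the substitution $m' = m - \lceil i/2 \rceil$ with $i = j$. Because $\overline{i}=2\lceil i/2\rceil$ by the paper's definition, one has $\overline{i}+2m' = 2m$, the lower limit becomes $m'=0$, and the upper limit becomes $m'=(p-1-\overline{i})/2$. This rewrites the coefficient as
\[
\sum_{m'=0}^{(p-1-\overline{i})/2}\binom{\overline{i}+2m'}{i} + \binom{p-2}{i},
\]
which is precisely the bracketed expression on the right-hand side. Summing over $i \in \{0,\ldots,p-2\}$ and adding the $\binom{\delta}{p}$ term closes the identity.

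I do not foresee a conceptual obstacle: the only real care is in the index bookkeeping, specifically verifying that the floor/ceiling parity shift implicit in $\overline{i}$ exactly realigns the range $\lceil i/2\rceil \leq m \leq (p-1)/2$ with $0 \leq m' \leq (p-1-\overline{i})/2$, and in correctly separating the $\binom{\delta}{p}$ contribution (which arises only from the endpoint $m=(p-1)/2$) from the main sum over $i \leq p-2$. Everything else is a direct application of Vandermonde's identity, which is the iterated form of Lemma \ref{nck}.
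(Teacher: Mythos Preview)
Your argument is correct and is essentially a fully worked-out version of what the paper does: the paper's proof is the single line ``straightforward by Lemma~\ref{nck}'', and your use of Vandermonde's convolution is exactly the iterated form of that hockey-stick identity, followed by the same swap of summation and the parity reindexing $\overline{i}=2\lceil i/2\rceil$. The index bookkeeping you flag (separating the lone $\binom{\delta}{p}$ from $m=(p-1)/2$, $j=p-1$, and the shift $m'=m-\lceil i/2\rceil$) is handled correctly.
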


According to Lemma \ref{nckk}, we can give a construction method to construct a $\mathbb{Z}_p\mathbb{Z}_{p^2}$-linear code for each value of $rank(C)$ which is shown before in the following theorem.
\begin{thm}\label{rk}
Let $\alpha,\beta,\gamma,\delta,\kappa$ be positive integers satisfying (\ref{eq:1}).
Then, there is a $\mathbb{Z}_p \mathbb{Z}_{p^2}$-linear code $C$ of type $(\alpha,\beta;\gamma,\delta;\kappa)$ with $rank(C)=r$ if and only if
$$r\in \bigg\{\gamma+2\delta,\ldots,\min\bigg(\beta+\delta+\kappa,\gamma+\delta+\sum_{l\in L}{\delta+l-1 \choose l}\bigg)\bigg\},$$
where $L=\{1+2m|1\leq m\leq \frac{p-1}{2}\}\cup \{p-1\}$.
\end{thm}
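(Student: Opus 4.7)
My plan is to split Theorem \ref{rk} into the necessity and sufficiency directions, leveraging Theorem \ref{base} together with Lemmas \ref{P3} and \ref{nckk}.

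For necessity, I would invoke Theorem \ref{base}(ii) directly. The lower bound $\gamma+2\delta$ comes from the fact that, in the standard form $\mathcal{G}_S$ of Theorem \ref{th:1}, the $\gamma$ Gray images of order-$p$ generators and the $\delta$ pairs $\Phi(\mathbf{v}_j), \Phi(p\mathbf{v}_j)$ are already linearly independent in $\langle C\rangle$. The upper bound $\gamma+\delta+\sum_{l\in L}{\delta+l-1 \choose l}$ comes from counting the generators of $\langle C\rangle$ furnished by Theorem \ref{base}(i), while the upper bound $\beta+\delta+\kappa$ comes from the column-block structure of $\mathcal{G}_S$: once rank $\gamma+2\delta+\overline{r}$ is written, the constraint $\overline{r}\leq \beta-(\gamma-\kappa)-\delta$ from Theorem \ref{base}(ii) translates to exactly $r\leq \beta+\delta+\kappa$. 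The minimum of these two upper bounds is the sharpest one.

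For sufficiency, write $r=\gamma+2\delta+\overline{r}$ with $0\leq \overline{r}\leq M$, where $M$ equals that minimum minus $\gamma+2\delta$. I would construct, for each such $\overline{r}$, a code $\mathcal{C}$ of type $(\alpha,\beta;\gamma,\delta;\kappa)$ in the standard form of Theorem \ref{th:1} whose blocks $S$, $R$ are tuned so that exactly $\overline{r}$ of the cross terms $\Phi(p\mathbf{v}_{i_1}*\cdots*\mathbf{v}_{i_l})$ with $l\in L$ become linearly independent modulo the span of the $\gamma+2\delta$ base vectors. Lemma \ref{nckk} supplies a convenient enumeration: each summand ${\delta \choose i+1}$ arranges the admissible multi-indices into batches that can be activated independently. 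The construction would start from a ``base'' code in which the $\delta$ rows $\mathbf{v}_j$ have pairwise disjoint supports on the $\beta$-part, so that every $p\mathbf{v}_{i_1}*\cdots*\mathbf{v}_{i_l}$ with a non-constant multi-index vanishes and the constant ones already live in $\Phi(\mathcal{C}_p)$; this realizes the minimum rank $\gamma+2\delta$. For each unit increment of $\overline{r}$, one additional overlap of supports is introduced on a free coordinate of the $\beta$-part so as to create one more linearly independent cross term.

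The main obstacle I anticipate is the monotone control of rank: I must show that each perturbation adds exactly one to the rank, neither zero nor two, and that the process can be iterated all the way to $\overline{r}=M$ without collisions. This will require ordering the admissible multi-indices according to the nonzero coefficient pattern from Lemma \ref{sij} (i.e.\ the set $T$), using Lemma \ref{P3} to verify that a prescribed overlap really does move a nonzero coefficient into a fresh coordinate, and then proceeding coordinate by coordinate when building the $\mathbf{v}_j$'s. Saturation against $\beta+\delta+\kappa$ arises when the supply of free $\beta$-columns is exhausted, while saturation against $\gamma+\delta+\sum_{l\in L}{\delta+l-1 \choose l}$ arises when all admissible multi-indices have been activated; the two-sided minimum in Theorem \ref{rk} is therefore precisely the largest achievable rank.
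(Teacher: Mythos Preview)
Your strategy matches the paper's at the top level: necessity is exactly Theorem~\ref{base}(ii), and sufficiency is achieved by working in the standard form of Theorem~\ref{th:1} and controlling the block you call $S$ (the paper calls it $S_r$) so as to hit every admissible $\overline{r}$. The one cosmetic difference is direction: you propose starting from $S_r=\mathbf{0}$ (minimum rank) and \emph{adding} columns one at a time, whereas the paper builds a full $S_r$ that realizes the maximum rank and then \emph{removes} columns one at a time. These are dual descriptions of the same construction and neither buys anything the other does not.

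The substantive gap is that you have not specified \emph{which} columns to add. The paper's real work is an explicit recipe: for each support size $i+1$ it defines row patterns via recursively built matrices $A_i^j$, collects them into sets $\Gamma_i$, and then forms the column families $M^\delta(\mathbf{x})$ for $\mathbf{x}\in\Gamma_i$, together with the scalar blocks $2I_\delta,\ldots,\tfrac{p+1}{2}I_\delta$ and $M^\delta(\mathbf{1})$. Lemma~\ref{nckk} is used not merely as a count but as a blueprint: the coefficient of each $\binom{\delta}{i+1}$ tells you exactly how many column patterns on a fixed $(i{+}1)$-support are needed, and the $A_i^j$ supply them so that the corresponding cross terms $\Phi(p\,\mathbf{v}_{i_1}^{a_1}*\cdots*\mathbf{v}_{i_{j+1}}^{a_{j+1}})$ become independent. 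Your proposal correctly names this ``monotone control of rank'' as the main obstacle, but the paper's contribution is precisely the concrete column list that resolves it; without an analogous explicit choice, your incremental step ``introduce one more overlap on a free coordinate'' is not yet a proof that rank goes up by exactly one, since an arbitrary new column can activate several cross terms (or none new) simultaneously.
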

\begin{proof}
Let $\C$ be a $\Z_p\Z_{p^2}$-additive code of type $(\alpha,\beta;\gamma,\delta;\kappa)$ with generator matrix
$$\mathcal{G}=\left(\begin{array}{cc|ccc}I_\kappa & T' & \mathbf{0} & \mathbf{0} & \mathbf{0}\\
\mathbf{0} & \mathbf{0} & pT_1 & pI_{\gamma-\kappa} & \mathbf{0}\\
\hline
\mathbf{0} & S' & S_r & \mathbf{0} & I_\delta
\end{array}\right),$$
where $S_r$ is a matrix over $\Z_{p^2}$ of size $\delta\times(\beta-(\gamma-\kappa)-\delta)$, and $C=\Phi(\C)$ is a $\Z_p\Z_{p^2}$-linear code.
Let $\{\mathbf{u}_i\}_{i=1}^\gamma$ and $\{\mathbf{v}_j\}_{j=1}^\delta$ be the row vectors of order $p$ and $p^2$ in $\mathcal{G}$, respectively.

The necessary condition follows from Theorem \ref{base}.
For the sufficiency of the theorem, we should construct a $\mathbb{Z}_p \mathbb{Z}_{p^2}$-linear code $C$ of type $(\alpha,\beta;\gamma,\delta;\kappa)$ with $rank(C)=r$. Since the bound $\beta+\delta+\kappa$ is trivial, it is enough to consider
$\min\bigg(\beta+\delta+\kappa,\gamma+\delta+\sum_{l\in L}{\delta+l-1 \choose l}\bigg)=\gamma+\delta+\sum_{l\in L}{\delta+l-1 \choose l}$,
that is,
$\overline{r}\in \bigg\{0,1,\ldots,\sum_{l\in L}{\delta+l-1 \choose l}-\delta\bigg\}$.
We construct $S_r$ over $\mathbb{Z}_{p^2}$ in the following way such that the code $\C$ reaches the bound $\gamma+\delta+\sum_{l\in L}{\delta+l-1 \choose l}$.

Let $s\leq \delta$ and $\mathbf{x}=(x_1,x_2,\ldots,x_s)\in \mathbb{Z}^s_{p^2}$, where $x_i\neq 0$ for $1\leq i\leq s$. Let $M^\delta(\mathbf{x})$ be a $\delta\times {\delta \choose s}$ matrix with ${\delta \choose s}$ different columns whose $i$-th nonzero component of each column is $x_i$ for $1\leq i\leq s$. For $1\leq j\leq p-1$, define the following matrices of size $j\times 2$
$$A^j_1=\left(
          \begin{array}{cc}
            p-j & p-j \\
            p-j & p-j+1 \\
            \vdots & \vdots \\
            p-j & p-1 \\
          \end{array}
        \right).
$$
For given $j$ and $2\leq i\leq j$, define the following matrices of size ${j \choose i}\times (i+1)$ with matrix $J^j_i$ of size ${j \choose i}\times (i+1)$, whose every entry is $1$,
$$A_i^j=\left(
\begin{array}{c|@{}c@{}}  \mathbf{a}_{p-j+1} & \begin{array}{c}A_{i-1}^{i-1}\\ A_{i-1}^{i}\\\vdots\\A_{i-1}^{j-1}\end{array}
\end{array}
\right)-J^j_i,
$$
where $\mathbf{a}_{p-j+1}$ is the first column of $A^j_i$ and each entry of that is $p-j+1$.

When $i=0$, the coefficient of ${\delta \choose 1}$ in Lemma \ref{nckk} is $\frac{p+3}{2}$, then we consider the following $\frac{p-1}{2}$ matrices
$$\{2I_\delta,3I_\delta,\ldots,\frac{p+1}{2}I_\delta\}.$$
When $1\leq i\leq p-2$, the coefficient of ${\delta \choose i+1}$ in Lemma \ref{nckk} is $\bigg(\sum_{m=0}^{\frac{p-1-\overline{i}}{2}}{\overline{i}+2m \choose i}+{p-2 \choose i}\bigg)$. Denote $\Gamma_i$ by the set of all rows are form $A_i^{\overline{i}+2m}$ with $0\leq m\leq \frac{p-1-\overline{i}}{2}$ and $A_i^{p-2}$, whose size is $\bigg(\sum_{m=0}^{\frac{p-1-\overline{i}}{2}}{\overline{i}+2m \choose i}+{p-2 \choose i}\bigg)$. Then we consider the following matrices
$$\{M^\delta(\mathbf{x})| \mathbf{x}\in \Gamma_i\}.$$
When $i=p-1$, the coefficient of ${\delta \choose p}$ in Lemma \ref{nckk} is $1$, then we consider the following matrix
$$M^\delta(\mathbf{1}), \mathbf{1}=(\underbrace{1,1,\ldots,1}_{p\ times}).$$
With matrices constructed above, we claim two facts as following.
\begin{enumerate}
\item [(1)] Let $S_r$ be composed of $\{2I_\delta,3I_\delta,\ldots,\frac{p+1}{2}I_\delta\}$, $\{M^\delta(\mathbf{x})| \mathbf{x}\in \Gamma_i\}_{i=1}^{p-2}$ and $M^\delta(\mathbf{1})$. When the generator matrix $\mathcal{G}$ of the code $C$ hsa such $S_r$, $rank(C)$ reaches the bound $\gamma+\delta+\sum_{l\in L}{\delta+l-1 \choose l}$.
\item [(2)] Let $S'_r$ be a submatrix of $S_r$ by removing any $t$ columns from the $S_r$, where $1\leq t\leq \sum_{l\in L}{\delta+l-1 \choose l}-\delta$. When the generator matrix $\mathcal{G}$ of the code $C$ has such $S'_r$, $rank(C)=r$ and $r=\gamma+\delta+\sum_{l\in L}{\delta+l-1 \choose l}-t$.
\end{enumerate}

$\langle C\rangle$ is generated by $\{\Phi(\mathbf{u}_i)\}_{i=1}^\gamma$, $\{\Phi(\mathbf{v}_j)\}_{j=1}^\delta$ and $\{\Phi(p\mathbf{v}_{i_1}*\cdots*\mathbf{v}_{i_l})\}_{1\leq i_1\leq\cdots\leq i_l\leq \delta}$ with $l\in L$ by Theorem \ref{base}. We divide these vectors of $\langle C\rangle$ into the following four parts:
\begin{enumerate}
\item [(i)] $\{\Phi(\mathbf{u}_i)\}_{i=1}^\gamma$, $\{\Phi(\mathbf{v}_j)\}_{j=1}^\delta$ and $\{\Phi(p\mathbf{v}^{p-1}_j)\}_{j=1}^\delta$;
\item [(ii)] $\{\Phi(p\mathbf{v}_j^l)\}_{j=1}^\delta$ with $l\in \{1+2m|1\leq m\leq \frac{p-1}{2}\}$;
\item [(iii)] For $1\leq j\leq p-2$, $\{\Phi(p\mathbf{v}^{a_1}_{i_1}*\cdots *\mathbf{v}^{a_{j+1}}_{i_{j+1}})\}_{1\leq i_1<\cdots < i_{j+1}\leq \delta}$ with $a_1+\cdots+a_{j+1}\in L$;
\item [(iv)] $\{\Phi(p\mathbf{v}_{i_1}*\cdots*\mathbf{v}_{i_p})\}_{1\leq i_1<\cdots <i_p\leq \delta}$.
\end{enumerate}
It is easy to check that when $\mathcal{G}$ without $S_r$($S_r=(\mathbf{0})$), the $\langle C\rangle$ is generated by $\{\Phi(\mathbf{u}_i)\}_{i=1}^\gamma$, $\{\Phi(\mathbf{v}_j)\}_{j=1}^\delta$, $\{\Phi(p\mathbf{v}^{p-1}_j)\}_{j=1}^\delta$ and $r=\gamma+ 2\delta$, i.e., $I_\delta$ ensures the vectors $\{\Phi(\mathbf{v}_j)\}_{j=1}^\delta$, $\{\Phi(p\mathbf{v}^{p-1}_j)\}_{j=1}^\delta$ linear independence. It is also easy to check that when $S_r=M^\delta(\mathbf{1})$, the $\langle C\rangle$ is generated by $\{\Phi(\mathbf{u}_i)\}_{i=1}^\gamma$, $\{\Phi(\mathbf{v}_j)\}_{j=1}^\delta$, $\{\Phi(p\mathbf{v}^{p-1}_j)\}_{j=1}^\delta$, $\{\Phi(p\mathbf{v}_{i_1}*\cdots*\mathbf{v}_{i_p})\}_{1\leq i_1<\cdots <i_p\leq \delta}$ and $r=\gamma+ 2\delta+{\delta \choose p}$, i.e., $M^\delta(\mathbf{1})$ ensures the vectors $\{\Phi(p\mathbf{v}_{i_1}*\cdots*\mathbf{v}_{i_p})\}_{1\leq i_1<\cdots <i_p\leq \delta}$ linear independence. Similarly, for $1\leq m\leq \frac{p-1}{2}$, $(m+1)I_\delta$ ensures the vectors $\{\Phi(p\mathbf{v}_j^{1+2m})\}_{j=1}^\delta$ linear independence. For $1\leq j\leq p-2$, $\{\Phi(p\mathbf{v}^{a_1}_{i_1}*\cdots *\mathbf{v}^{a_{j+1}}_{i_{j+1}})\}_{1\leq i_1<\cdots < i_{j+1}\leq \delta}$ with $a_1+\cdots+a_{j+1}\in L$ are linear independence by $\{M^\delta(\mathbf{x})| \mathbf{x}\in \Gamma_j\}$. So the vectors $\{\Phi(\mathbf{u}_i)\}_{i=1}^\gamma$, $\{\Phi(\mathbf{v}_j)\}_{j=1}^\delta$ and $\{\Phi(p\mathbf{v}_{i_1}*\cdots*\mathbf{v}_{i_l})\}_{1\leq i_1\leq\cdots\leq i_l\leq \delta}$ with $l\in L$ are the basis of $\langle C\rangle$ with $S_r$ and $r$ reaches the bound $\gamma+\delta+\sum_{l\in L}{\delta+l-1 \choose l}$, i.e, the claim $(1)$ is true.

For the claim $(2)$, it is enough to prove that the case $t=1$. Without loss of generality, if we remove the first column from $M^\delta(\mathbf{1})$, then we have $\Phi(p\mathbf{v}_{1}*\cdots*\mathbf{v}_{p})=\mathbf{0}$. Actually, if we remove any one column from $M^\delta(\mathbf{1})$, then there is only one vector from $\{\Phi(p\mathbf{v}_{i_1}*\cdots*\mathbf{v}_{i_p})\}_{1\leq i_1<\cdots <i_p\leq \delta}$ which equals to $\mathbf{0}$. For $1\leq m\leq \frac{p-1}{2}$, if we remove any one column from $(m+1)I_\delta$, then it is easy to check that exist one vector of $\{\Phi(p\mathbf{v}_j^{1+2m})\}_{j=1}^\delta$ can be linearly represented by the other vectors. Similarly, we have the other cases $\{M^\delta(\mathbf{x})|\mathbf{x}\in \Gamma_j\}_{j=1}^{p-2}$.

Based on the claims above, the theorem is proved.
\end{proof}

\section{Pairs of rank and dimension of kernel of $\mathbb{Z}_p\mathbb{Z}_{p^2}$-additive codes}\label{sec:5}
The dimension of the kernel of $\mathbb{Z}_p\mathbb{Z}_{p^2}$-additive code have been studied in \cite[Sect. 4]{SWL}. We list some important results of \cite[Sect. 4]{SWL} as follows.
\begin{lem}\cite[Sect. 4]{SWL}\label{l:9}
Let $\mathcal{C}$ be a $\mathbb{Z}_p \mathbb{Z}_{p^2}$-additive code and $C=\Phi(\mathcal{C})$. Then,
$$K(C)=\{\Phi(\mathbf{u})|\mathbf{u}\in\mathcal{C}\text{ \rm{and} } pP(\mathbf{u},\mathbf{v})\in \mathcal{C}, \forall \mathbf{v}\in\mathcal{C}\}.$$
\end{lem}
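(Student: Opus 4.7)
The strategy is to characterize when a vector $x \in \mathbb{Z}_p^n$ lies in $K(C)$ by first observing that every kernel element must itself be the image of some codeword of $\mathcal{C}$, and then translating the kernel condition into an additive condition on $\mathcal{C}$ by means of Lemma~\ref{P1}.

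First I would show $K(C) \subseteq C$. Since $\mathcal{C}$ is a subgroup containing $\mathbf{0}$, we have $\mathbf{0} = \Phi(\mathbf{0}) \in C$; if $x \in K(C)$, then $x = \mathbf{0} + x \in C + x = C$, so $x = \Phi(\mathbf{u})$ for a unique $\mathbf{u} \in \mathcal{C}$. It therefore suffices, for any such $\mathbf{u}$, to characterize when $\Phi(\mathbf{u}) + \Phi(\mathbf{v}) \in C$ for every $\mathbf{v} \in \mathcal{C}$.

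Next I would express $\Phi(\mathbf{u})+\Phi(\mathbf{v})$ as the $\Phi$-image of a single element of $\mathbb{Z}_p^{\alpha} \times \mathbb{Z}_{p^2}^{\beta}$. Lemma~\ref{P1} gives
$$\Phi(\mathbf{u}+\mathbf{v}) = \Phi(\mathbf{u}) + \Phi(\mathbf{v}) + \Phi(pP(\mathbf{u},\mathbf{v})).$$
Every nonzero coordinate of $pP(\mathbf{u},\mathbf{v})$ equals $p$, so the explicit form of the Gray map yields $\Phi(-pP(\mathbf{u},\mathbf{v})) = -\Phi(pP(\mathbf{u},\mathbf{v}))$ in $\mathbb{Z}_p^n$. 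Applying Lemma~\ref{P1} a second time to the pair $(\mathbf{u}+\mathbf{v},\, -pP(\mathbf{u},\mathbf{v}))$, the cross term $pP(\mathbf{u}+\mathbf{v},-pP(\mathbf{u},\mathbf{v}))$ vanishes because its second argument has trivial lower $p$-ary digit. Combining these gives
$$\Phi\bigl(\mathbf{u}+\mathbf{v}-pP(\mathbf{u},\mathbf{v})\bigr) \;=\; \Phi(\mathbf{u}+\mathbf{v}) - \Phi(pP(\mathbf{u},\mathbf{v})) \;=\; \Phi(\mathbf{u})+\Phi(\mathbf{v}).$$

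To conclude, I would invoke that $\Phi$ is a bijection from $\mathbb{Z}_p^{\alpha}\times \mathbb{Z}_{p^2}^{\beta}$ onto $\mathbb{Z}_p^n$. Consequently $\Phi(\mathbf{u})+\Phi(\mathbf{v}) \in C$ is equivalent to $\mathbf{u}+\mathbf{v}-pP(\mathbf{u},\mathbf{v}) \in \mathcal{C}$, and since $\mathbf{u}+\mathbf{v} \in \mathcal{C}$ (as $\mathcal{C}$ is a subgroup), this reduces to $pP(\mathbf{u},\mathbf{v}) \in \mathcal{C}$. Quantifying over all $\mathbf{v} \in \mathcal{C}$ yields exactly the set on the right-hand side of the claimed equality. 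The main obstacle is the algebraic identity $\Phi(\mathbf{u})+\Phi(\mathbf{v}) = \Phi(\mathbf{u}+\mathbf{v}-pP(\mathbf{u},\mathbf{v}))$: one must carefully check that $\Phi$ is sign-reversing on vectors whose coordinates lie in $\{0,p\}$ and that the nested application of Lemma~\ref{P1} introduces no further correction. Once this identity is secured, the remaining steps follow from the bijectivity of $\Phi$ and the subgroup structure of $\mathcal{C}$.
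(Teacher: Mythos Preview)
The paper does not give its own proof of this lemma; it is quoted verbatim from \cite[Sect.~4]{SWL} as background material, so there is nothing in the present paper to compare against.

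On its own merits your argument is sound and is essentially the standard one. The key identity
\[
\Phi(\mathbf{u})+\Phi(\mathbf{v})=\Phi\bigl(\mathbf{u}+\mathbf{v}-pP(\mathbf{u},\mathbf{v})\bigr)
\]
is correctly derived: since every coordinate of $pP(\mathbf{u},\mathbf{v})$ lies in $\{0,p\}$, its lower $p$-ary digit vanishes, so a second application of Lemma~\ref{P1} with second argument $-pP(\mathbf{u},\mathbf{v})$ produces no carry term, and on such order-$p$ vectors $\Phi$ commutes with negation. From there the equivalence $\Phi(\mathbf{u})+\Phi(\mathbf{v})\in C \Longleftrightarrow pP(\mathbf{u},\mathbf{v})\in\mathcal{C}$ follows, and quantifying over $\mathbf{v}\in\mathcal{C}$ gives both inclusions.

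One small correction: $\Phi$ is \emph{not} a bijection from $\mathbb{Z}_p^{\alpha}\times\mathbb{Z}_{p^2}^{\beta}$ onto $\mathbb{Z}_p^{n}$, since $|\mathbb{Z}_p^{\alpha}\times\mathbb{Z}_{p^2}^{\beta}|=p^{\alpha+2\beta}$ while $|\mathbb{Z}_p^{n}|=p^{\alpha+p\beta}$, and these differ whenever $p>2$ and $\beta>0$. What you actually use, and what is true, is that $\Phi$ is \emph{injective}; this is enough to conclude that $\Phi(\mathbf{w})\in C=\Phi(\mathcal{C})$ forces $\mathbf{w}\in\mathcal{C}$. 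Replace ``bijection'' by ``injection'' and the argument is complete.
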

It's obvious that all codewords of order $p$ in $\C$ under $\Phi$ belong to $K(C)$,
and $\Phi(\mathbf{u})\notin K(C)$ if and only if there is $ \mathbf{v}\in \mathcal{C}$ such that $pP(\mathbf{u},\mathbf{v})\notin \mathcal{C}$.
\begin{prop}\cite[Sect. 4]{SWL}\label{prop:11}
Let $\C$ be a $\Z_p\Z_{p^2}$-additive code of type $(\alpha,\beta;\gamma,\delta;\kappa)$, with generator matrix $\mathcal{G}$, and $C=\Phi(\C)$ with $ker(C)=\gamma+2\delta-\overline{k}$, where $\overline{k}\in \{1,2,\ldots,\delta\}$.
Then, there exists a set $\{\mathbf{v}_1,\mathbf{v}_2,\ldots,\mathbf{v}_{\overline{k}}\}$ of row vectors of order $p^2$ in $\G$ and $\Phi(\vv_i)\notin K(C)$, such that for all $a_i \in \Z_p$,
$$C=\bigcup_{a_i\in \Z_p}\bigg(K(C)+\Phi\bigg(\sum_{i=1}^{\overline{k}}a_i\mathbf{v}_i\bigg)\bigg).$$
\end{prop}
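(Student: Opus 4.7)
The plan is to consider the preimage subcode $\mathcal{K}^{*} = \{\mathbf{z} \in \mathcal{C} : \Phi(\mathbf{z}) \in K(C)\}$ and to transfer the coset structure from $K(C) \subseteq C$ back to $\mathcal{K}^{*} \subseteq \mathcal{C}$. The first and central step is to show that $\mathcal{K}^{*}$ is an additive subgroup of $\mathcal{C}$ containing $\mathcal{C}_p$. Inclusion $\mathcal{C}_p \subseteq \mathcal{K}^{*}$ is immediate, since any order-$p$ codeword has all its $\Z_{p^2}$-components divisible by $p$, forcing $pP(\mathbf{z}, \mathbf{v}) = \mathbf{0}$ for every $\mathbf{v}$, and then Lemma \ref{l:9} places $\Phi(\mathbf{z})$ in $K(C)$. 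For closure under sums, I take $\mathbf{z}_1, \mathbf{z}_2 \in \mathcal{K}^{*}$ and apply Lemma \ref{l:9} to $\mathbf{z}_1$ to conclude $pP(\mathbf{z}_1, \mathbf{z}_2) \in \mathcal{C}$; this element has order $p$ and therefore lies in $\mathcal{C}_p \subseteq \mathcal{K}^{*}$, so $\Phi(pP(\mathbf{z}_1, \mathbf{z}_2)) \in K(C)$. Combining with Lemma \ref{P1} yields $\Phi(\mathbf{z}_1 + \mathbf{z}_2) \in K(C)$. This closure argument is where the full strength of Lemma \ref{l:9} is used and is the main obstacle.

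With $\mathcal{K}^{*}$ a subgroup in hand, cardinalities give $|\mathcal{K}^{*}| = |K(C)| = p^{\gamma+2\delta-\overline{k}}$ (using $K(C) \subseteq C$, since $\mathbf{0} \in C$ forces every element of $K(C)$ to sit in $C$), and therefore $[\mathcal{C} : \mathcal{K}^{*}] = p^{\overline{k}}$. Moreover $p\mathcal{C} \subseteq \mathcal{C}_p \subseteq \mathcal{K}^{*}$, so $\mathcal{C}/\mathcal{K}^{*}$ is an elementary abelian $p$-group, i.e.\ an $\F_p$-vector space of dimension $\overline{k}$. Since $\mathbf{u}_1, \ldots, \mathbf{u}_\gamma \in \mathcal{C}_p \subseteq \mathcal{K}^{*}$, the residues $\bar{\vv}_1, \ldots, \bar{\vv}_\delta$ span $\mathcal{C}/\mathcal{K}^{*}$, and after reindexing I can select $\overline{k}$ of them, say $\vv_1, \ldots, \vv_{\overline{k}}$, whose residues form a basis. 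The nonvanishing of each $\bar{\vv}_i$ is exactly the condition $\Phi(\vv_i) \notin K(C)$ demanded by the statement.

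For the decomposition itself, every $\mathbf{c} \in \mathcal{C}$ can be written uniquely as $\mathbf{c} = \sum_{i=1}^{\overline{k}} a_i \vv_i + \mathbf{k}$ with $a_i \in \Z_p$ and $\mathbf{k} \in \mathcal{K}^{*}$. Lemma \ref{P1} then gives
$$\Phi(\mathbf{c}) = \Phi\Big(\sum_{i=1}^{\overline{k}} a_i \vv_i\Big) + \Phi(\mathbf{k}) + \Phi\Big(pP\Big(\sum_{i=1}^{\overline{k}} a_i \vv_i, \mathbf{k}\Big)\Big).$$
Both trailing summands lie in $K(C)$: the middle one by definition of $\mathcal{K}^{*}$, and the last one because symmetry of $pP$ combined with Lemma \ref{l:9} applied to $\mathbf{k} \in \mathcal{K}^{*}$ puts $pP(\sum_i a_i \vv_i, \mathbf{k}) \in \mathcal{C}_p \subseteq \mathcal{K}^{*}$. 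Hence $\Phi(\mathbf{c}) \in K(C) + \Phi\bigl(\sum_{i=1}^{\overline{k}} a_i \vv_i\bigr)$, establishing $C \subseteq \bigcup_{a_i \in \Z_p}\bigl(K(C) + \Phi\bigl(\sum_{i=1}^{\overline{k}} a_i \vv_i\bigr)\bigr)$; the reverse inclusion is immediate, and the cardinality identity $|C| = p^{\overline{k}}\,|K(C)|$ forces the $p^{\overline{k}}$ cosets on the right-hand side to be pairwise disjoint.
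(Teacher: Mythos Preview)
The paper does not give its own proof of this proposition: it is quoted verbatim from \cite[Sect.~4]{SWL} and used as a black box in the proof of Lemma~\ref{l:15}. So there is no in-paper argument to compare against.

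That said, your proof is correct and self-contained. The key move---showing that the preimage $\mathcal{K}^{*}=\Phi^{-1}(K(C))\cap\mathcal{C}$ is a genuine subgroup of $\mathcal{C}$ containing $\mathcal{C}_p$---is handled cleanly: you use Lemma~\ref{l:9} to get $pP(\mathbf{z}_1,\mathbf{z}_2)\in\mathcal{C}$, observe it has order $p$ and hence lies in $\mathcal{C}_p\subseteq\mathcal{K}^{*}$, and then linearity of $K(C)$ together with Lemma~\ref{P1} closes the sum. The passage from $p\mathcal{C}\subseteq\mathcal{K}^{*}$ to $\mathcal{C}/\mathcal{K}^{*}$ being an $\F_p$-space of dimension $\overline{k}$, and the extraction of the $\overline{k}$ coset representatives among the order-$p^2$ rows $\mathbf{v}_j$, is exactly the structural content the statement asserts. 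Your final coset decomposition argument, using symmetry of $pP$ and Lemma~\ref{l:9} once more to place the cross term $\Phi\bigl(pP(\sum a_i\mathbf{v}_i,\mathbf{k})\bigr)$ inside $K(C)$, is sound; the reverse inclusion follows from the defining property $C+\mathbf{x}=C$ for $\mathbf{x}\in K(C)$, as you indicate.
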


\begin{thm}\cite[Sect. 4]{SWL}\label{Th:13}
Let $\alpha,\beta,\gamma,\delta,\kappa$ be integers satisfying (\ref{eq:1}), then we have the following two assertions:
\begin{enumerate}
  \item [\rm (i)] Let $C$ be a $\mathbb{Z}_p \mathbb{Z}_{p^2}$-linear code $C$ of type $(\alpha,\beta;\gamma,\delta;\kappa)$,
then $ker(C)=\gamma+2\delta-\overline{k}$, where $\overline{k} \in\{0,1,\ldots,\delta\}$.
  \item [\rm (ii)] If $\overline{k} \in\{0,1,\ldots,\delta\}$,
 then there exists a $\mathbb{Z}_p\mathbb{Z}_{p^2}$-linear code $C$ of type $(\alpha,\beta;\gamma,\delta;\kappa)$ with $ker(C)=\gamma+2\delta-\overline{k}$.
  \end{enumerate}
\end{thm}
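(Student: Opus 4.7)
My plan treats the two parts separately; part (i) is a short coset-counting argument, and part (ii) is an explicit family of constructions parameterized by $\overline{k}$.

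For (i), I would invoke Proposition \ref{prop:11}, which already exhibits $C$ as a disjoint union of $p^{\overline{k}}$ cosets of $K(C)$ indexed by integer combinations $\sum a_i\mathbf{v}_i$ of $\overline{k}$ specific rows of order $p^2$ in a generator matrix $\mathcal{G}$. Comparing cardinalities in $|C|=|K(C)|\cdot p^{\overline{k}}$, and recalling $|C|=p^{\gamma+2\delta}$, gives $|K(C)|=p^{\gamma+2\delta-\overline{k}}$ and hence $ker(C)=\gamma+2\delta-\overline{k}$. The bound $\overline{k}\le\delta$ is automatic because the $\mathbf{v}_i$ are chosen among the $\delta$ order-$p^2$ rows of $\mathcal{G}$, and $\overline{k}\ge 0$ is trivial.

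For (ii), the plan is to exhibit, for each $\overline{k}\in\{0,1,\ldots,\delta\}$, an explicit $\mathbb{Z}_p\mathbb{Z}_{p^2}$-additive code in standard form $\mathcal{G}_S$ (Theorem \ref{th:1}) whose Gray image attains the prescribed kernel dimension. The governing criterion, from Lemma \ref{l:9}, is that $\Phi(\mathbf{v}_j)\in K(C)$ if and only if $pP(\mathbf{v}_j,\mathbf{v})\in \C$ for every $\mathbf{v}\in\C$; by linearity it suffices to test this on generators, and Lemma \ref{P3} reduces the cross term $pP(\mathbf{v}_j,\mathbf{v}_{j'})$ essentially to the $*$-product of $\mathbf{v}_j$ with $\mathbf{v}_{j'}$ times $p$. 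The construction is then modular: for $\overline{k}=0$ take $S=\mathbf{0}$, so every cross-product lies in the span of $\{p\mathbf{v}_j\}$ and $C$ is linear; for $1\le\overline{k}\le\delta$, begin from this linear code and adjoin $\overline{k}$ ``coupling'' columns of the same atomic $M^\delta$ type used in the proof of Theorem \ref{rk}, each engineered so that exactly one of $\mathbf{v}_1,\ldots,\mathbf{v}_{\overline{k}}$ produces a cross-product lying outside $\C$ (hence kicking it out of $K(C)$) while the remaining rows $\mathbf{v}_{\overline{k}+1},\ldots,\mathbf{v}_\delta$ are left undisturbed.

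The main obstacle will be showing that the construction attains $\overline{k}$ \emph{exactly}, not strictly more. The upper bound is built into the block-by-block design, but the ``not less'' direction requires verifying that no accidental additional violations occur when one mixes affected and unaffected rows. This reduces, via the coordinate-wise form of Lemma \ref{P3}, to checking that a coupling block acting on rows $\{j,j'\}$ contributes nothing to $pP(\mathbf{v}_k,\mathbf{v})$ for $k\notin\{j,j'\}$, which is immediate from the block-diagonal placement of the coupling columns together with the vanishing pattern of $s(i,j)$ in Lemma \ref{sij}. Once this independence is established, the $\overline{k}$ coupling blocks yield exactly $p^{\overline{k}}$ distinct cosets of $K(C)$ inside $C$, and Proposition \ref{prop:11} converts this into $ker(C)=\gamma+2\delta-\overline{k}$, completing (ii). One must also check that the type $(\alpha,\beta;\gamma,\delta;\kappa)$ is preserved at each step, which follows because the coupling columns are appended inside the $\beta$-block without altering $\gamma$, $\delta$, or $\kappa$.
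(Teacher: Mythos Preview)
The paper does not prove this theorem at all; it is quoted verbatim from \cite[Sect.~4]{SWL}, and the only trace of the construction for (ii) visible here is the single column $(p-1,\ldots,p-1)^T$ of height $\overline{k}$ used inside the proof of Theorem~\ref{th:16}. So there is no in-paper proof to compare against, and your proposal has to stand on its own.

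For (i), your argument is circular as written. Proposition~\ref{prop:11} takes $ker(C)=\gamma+2\delta-\overline{k}$ with $\overline{k}\in\{1,\ldots,\delta\}$ as a \emph{hypothesis} and then produces the coset decomposition; you cannot invoke it to deduce that $\overline{k}\le\delta$ ``because the $\mathbf{v}_i$ are chosen among the $\delta$ order-$p^2$ rows'', since that restriction on $\overline{k}$ is already assumed in the statement you are citing. The honest argument for (i) does not use Proposition~\ref{prop:11} at all: by the remark after Lemma~\ref{l:9}, the Gray images of all order-$p$ codewords of $\mathcal{C}$ lie in $K(C)$, and since $\Phi$ is linear on codewords of order $p$ these images form a subspace of dimension $\gamma+\delta$, giving $ker(C)\ge\gamma+\delta$; on the other side $K(C)\subseteq C$ forces $|K(C)|\le p^{\gamma+2\delta}$, hence $ker(C)\le\gamma+2\delta$. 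That is the whole proof of (i).

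For (ii), your plan of adjoining $\overline{k}$ separate ``coupling'' blocks, one per generator you wish to eject from the kernel, is workable but heavier than what the cited source does. As you can read off from the proof of Theorem~\ref{th:16}, a \emph{single} column $(p-1,\ldots,p-1,0,\ldots,0)^T$ (nonzero in the first $\overline{k}$ rows of the $S$-block) already forces $\Phi(\mathbf{v}_j)\notin K(C)$ for every $j\le\overline{k}$ simultaneously while leaving $\mathbf{v}_{\overline{k}+1},\ldots,\mathbf{v}_\delta$ in the kernel. Your block-by-block scheme would also work, but the ``not more'' verification you flag as the main obstacle is exactly where it becomes delicate: you would need each diagonal block to kick its row out via a self-interaction $pP(\mathbf{v}_j,a\mathbf{v}_j)\notin\mathcal{C}$, and the entry you place must be chosen so that the relevant term in Lemma~\ref{P3} does not accidentally land back in $\mathcal{C}$. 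The single all-$(p-1)$ column avoids this case analysis entirely.
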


With the results above, we give the range of the rank of the $\Z_p\Z_{p^2}$-linear codes if its dimension of the kernel is fixed.
Once a possible pair of values $(r,k)$ is given,
the $\Z_p\Z_{p^2}$-linear code $C$ of type $(\alpha,\beta;\gamma,\delta;\kappa)$ with $r=rank(C)$ and $k=ker(C)$ can also be constructed.
\begin{lem}\label{l:15}
Let $\C$ be a $\Z_p\Z_{p^2}$-additive code of type $(\alpha,\beta;\gamma,\delta;\kappa)$,
and $C=\Phi(\C)$ is the corresponding $\Z_p\Z_{p^2}$-linear code.
If $rank(C)=\gamma+2\delta+\overline{r}$ and $ker(C)=\gamma+2\delta-\overline{k}$, where $\overline{k}\in\{1,\ldots,\delta\}$,
then $$1\leq\overline{r}\leq \sum_{l\in L}{\overline{k}+l-1 \choose l}-\overline{k}.$$
Moreover, $\overline{r}=0$ if $\overline{k}=0$.
\end{lem}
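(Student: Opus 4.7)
The plan is to address three pieces separately: the degenerate case $\overline{k}=0$, the lower bound $\overline{r}\geq 1$ when $\overline{k}\geq 1$, and the upper bound $\overline{r}\leq\sum_{l\in L}\binom{\overline{k}+l-1}{l}-\overline{k}$. The moreover clause is immediate: $\overline{k}=0$ forces $|K(C)|=|C|$, so $K(C)=C$ (since $K(C)\subseteq C$), making $C$ linear and $\overline{r}=0$. For the lower bound, Proposition \ref{prop:11} supplies $\mathbf{v}_1$ with $\Phi(\mathbf{v}_1)\notin K(C)$; Lemma \ref{l:9} then provides $\mathbf{w}\in\mathcal{C}$ with $pP(\mathbf{v}_1,\mathbf{w})\notin\mathcal{C}$, and Lemma \ref{P1} rewrites $\Phi(pP(\mathbf{v}_1,\mathbf{w}))=\Phi(\mathbf{v}_1+\mathbf{w})-\Phi(\mathbf{v}_1)-\Phi(\mathbf{w})$ as an element of $\langle C\rangle\setminus C$, forcing $\overline{r}\geq 1$.

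For the upper bound, fix the vectors $\mathbf{v}_1,\ldots,\mathbf{v}_{\overline{k}}$ of Proposition \ref{prop:11} and define
$$V_0:=\bigl\langle\Phi\bigl(\textstyle\sum_{i=1}^{\overline{k}}a_i\mathbf{v}_i\bigr):a_i\in\mathbb{Z}_p\bigr\rangle_{\mathbb{F}_p}\subseteq\mathbb{F}_p^n.$$
Proposition \ref{prop:11} immediately yields $\langle C\rangle=K(C)+V_0$, so the subspace dimension formula gives
$$\overline{r}+\overline{k}=\dim\langle C\rangle-\dim K(C)=\dim V_0-\dim(V_0\cap K(C)).$$
A short check (decomposing scalars $a\in\mathbb{Z}_{p^2}$ as $a_0+a_1p$ and using the $\mathbb{F}_p$-linearity of $\Phi$ on the order-$p$ subcode) identifies $V_0$ with $\langle\Phi(\mathcal{C}')\rangle$ for $\mathcal{C}':=\langle\mathbf{v}_1,\ldots,\mathbf{v}_{\overline{k}}\rangle$. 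The spanning argument of Theorem \ref{base}(i), applied to $\mathcal{C}'$ (with $\gamma'=0$ and $\delta'=\overline{k}$), then produces $\dim V_0\leq\overline{k}+\sum_{l\in L}\binom{\overline{k}+l-1}{l}$.

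The final step is to exhibit $\overline{k}$ linearly independent vectors in $V_0\cap K(C)$. The candidates are $\Phi(p\mathbf{v}_1),\ldots,\Phi(p\mathbf{v}_{\overline{k}})$: each $p\mathbf{v}_i$ is an order-$p$ codeword of $\mathcal{C}$, so $pP(p\mathbf{v}_i,\mathbf{y})=0$ for every $\mathbf{y}\in\mathcal{C}$ and Lemma \ref{l:9} places $\Phi(p\mathbf{v}_i)$ in $K(C)$; iterating Lemma \ref{P1} produces the telescoping identity
$$\Phi(p\mathbf{v}_i)=\sum_{a=0}^{p-1}\bigl(\Phi((a+1)\mathbf{v}_i)-\Phi(a\mathbf{v}_i)-\Phi(\mathbf{v}_i)\bigr),$$
which places $\Phi(p\mathbf{v}_i)$ in $V_0$; and linear independence of these $\overline{k}$ vectors follows from the $I_\delta$ block of the generator matrix of Theorem \ref{th:1} together with the $\mathbb{F}_p$-linearity of $\Phi$ on order-$p$ codewords. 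Hence $\dim(V_0\cap K(C))\geq\overline{k}$, and substituting into the dimension equation delivers the claimed bound.

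The main obstacle is pinning down the ``$-\overline{k}$'' correction: a naive generator count only gives $\overline{r}\leq\sum_{l\in L}\binom{\overline{k}+l-1}{l}$, and one has to recognise the extra $\overline{k}$ dimensions of $V_0\cap K(C)$ formed by the order-$p$ vectors $\Phi(p\mathbf{v}_i)$, which live simultaneously in the span of the coset representatives (via the telescoping identity) and in the kernel.
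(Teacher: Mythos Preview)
Your dimension-formula strategy is cleaner than the paper's explicit generator-list argument, but there is a genuine gap in the step ``$\Phi(p\mathbf{v}_i)\in V_0$''. The telescoping identity you wrote is tautological: the term $a=p-1$ on the right-hand side contributes $\Phi(p\mathbf{v}_i)$ itself, so after cancellation you are left with $0=(\text{things in }V_0)$, which proves nothing about $\Phi(p\mathbf{v}_i)$. Worse, the claim is actually false in general. Take $p=3$, $\alpha=0$, $\beta=1$, $\mathbf{v}_1=(1)\in\Z_9$; then $V_0=\langle\phi(0),\phi(1),\phi(2)\rangle=\langle(0,1,2)\rangle$ is one-dimensional, while $\Phi(3\mathbf{v}_1)=\phi(3)=(1,1,1)\notin V_0$. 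The same example shows that your ``short check'' identifying $V_0$ with $\langle\Phi(\mathcal{C}')\rangle$ cannot be right either: here $\langle\Phi(\mathcal{C}')\rangle$ is two-dimensional.

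The fix is painless: work throughout with $W:=\langle\Phi(\mathcal{C}')\rangle$ instead of $V_0$. You only ever need the inclusion $V_0\subseteq W$, which is obvious, and since $W\subseteq\langle C\rangle$ one still has $\langle C\rangle=K(C)+W$. Now $\Phi(p\mathbf{v}_i)\in W$ is immediate (because $p\mathbf{v}_i\in\mathcal{C}'$), these $\overline{k}$ vectors are independent and lie in $K(C)$, and Theorem~\ref{base}(i) applied to $\mathcal{C}'$ bounds $\dim W\le\overline{k}+\sum_{l\in L}\binom{\overline{k}+l-1}{l}$; the dimension formula then gives the claimed bound. With this repair your argument is correct and genuinely different from the paper's: the paper instead unpacks $K(C)$ explicitly, shows that for $k>\overline{k}$ every carry term $\Phi(pP(\mathbf{v}_k,\mathbf{v}))$ already lies in the span of $\{\Phi(\mathbf{u}_i)\}\cup\{\Phi(p\mathbf{v}_j)\}$, and thereby produces an explicit spanning set for $\langle C\rangle$ whose size yields the bound.
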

\begin{proof}If $\overline{k}=0$, then $C$ is linear and $\overline{r}=0$.
It is clear that $\overline{r}\geq 1$ if $\overline{k}\geq 1$.
Let $\{\mathbf{u}_i\}_{i=1}^\gamma$, $\{\mathbf{v}_j\}_{j=1}^\delta$ be the row vectors of the matrix $\mathcal{G}$ of order $p$ and $p^2$, respectively.
By Proposition \ref{prop:11}, there exists a set $\{\mathbf{v}_1,\mathbf{v}_2,\ldots,\mathbf{v}_{\overline{k}}\}$ of row vectors of order $p^2$ in the generator matrix $\mathcal{G}$,
such that for all $a_i \in \Z_p$,
$C=\bigcup_{a_i\in \Z_p}\bigg(K(C)+\Phi\bigg(\sum_{i=1}^{\overline{k}}a_i\mathbf{v}_i\bigg)\bigg).$
By Lemmas \ref{P1} and \ref{P3},
$\Phi\bigg(\sum_{i=1}^{\overline{k}}a_i\mathbf{v}_i\bigg)=\Sigma_{j=1}^{\overline{k}}a_i\Phi(\mathbf{v}_j)+\Sigma_{l\in L}\Sigma_{1\leq i_1\leq\cdots\leq i_l\leq \overline{k}}(\tilde{a}\Phi(p\mathbf{v}_{i_1}*\cdots*\mathbf{v}_{i_l}))$, $\tilde{a}=\tilde{a}(i_1,i_2,\ldots,i_l)\in \Z_p$ for all $l\in L$.

Let $\{\mu_i\}_{i=1}^{\gamma+\delta}$ be the rows of $\mathcal{G}$.
In fact, $\mu_i=\mathbf{u}_i$ for $i\in \{1,\ldots,\gamma\}$ and $\mu_{j+\gamma}=\mathbf{v}_j$ for $j\in \{1,\ldots,\delta\}$. For $\mathbf{c}\in \mathcal{C}$, let $ord(\mathbf{c})$ mean the order of the vector $\mathbf{c}$, now consider the following two sets $A$ and $B$,
$$A=\{\mathbf{c}| ord(\mathbf{c})=p\}$$
and
$$\begin{aligned}
B=\big\{&\mathbf{c}=\sum_{i=\gamma+\overline{k}+1}^{\gamma+\delta} a_i\mu_i+\sum_{j=1}^{\gamma+\overline{k}} b_j\mu_j\mid a_i\in \Z_{p^2} \text{ and }\exists a_i \in\Z_{p^2}\backslash p\Z_{p^2};\\
&b_j\in \Z_{p} \text{ for } j\in \{1,\ldots,\gamma\} \text{ and } b_j\in p\Z_{p^2} \text{ for } j\in \{\gamma+1,\ldots,\gamma+\overline{k}\}\big\}.
\end{aligned}$$
By the proof of Theorem \ref{Th:13} (ii) in \cite{SWL}, we already know that $K(C)=A\cup B$. The set $A$ can be generated by $\{\mathbf{u}_i\}_{i=1}^\gamma$, $\{p\mathbf{v}_j\}_{j=1}^\delta$. For any vector $\mathbf{c} \in B$, $\Phi(\mathbf{c})=\Phi(\sum_{i=\gamma+\overline{k}+1}^{\gamma+\delta} a_i\mu_i+\sum_{j=1}^{\gamma+\overline{k}}b_j\mu_j)
=\Phi(\sum_{i=\gamma+\overline{k}+1}^{\gamma+\delta} a_i\mu_i)+\Phi(\sum_{j=1}^{\gamma+\overline{k}}b_j\mu_j)$ by Lemma \ref{P1}.
The order of $\sum_{j=1}^{\gamma+\overline{k}}b_j\mu_j$ is $p$, so $\Phi(\sum_{j=1}^{\gamma+\overline{k}}b_j\mu_j)$
can also be generated by $\{\Phi(\mathbf{u}_i)\}_{i=1}^\gamma$, $\{\Phi(p\mathbf{v}_j)\}_{j=1}^\delta$.
By the definition of $\mu_j$, now we only consider  $\Phi(\sum_{i=\gamma+\overline{k}+1}^{\gamma+\delta} a_i\mu_i)=\Phi(\sum_{j=\overline{k}+1}^{\delta} a_j\mathbf{v}_j), a_j \in\Z_{p^2}$.

The first we will prove that if there exists $k\in\{\overline{k}+1,\ldots,\delta\}$,
then $\Phi(pP(\mathbf{v}_k,\mathbf{v}))$
for all $\mathbf{v} \in\C$
is a linear combination of $\{\Phi(\mathbf{u}_i)\}_{i=1}^\gamma$, $\{\Phi(p\mathbf{v}_j)\}_{j=1}^\delta$.
$k\in\{\overline{k}+1,\ldots,\delta\}$ implies that $\Phi(\mathbf{v}_k)\in K(C)$ from the proof of Proposition \ref{prop:11} in \cite{SWL}, that is, for all $\mathbf{v}\in \C$,
$pP(\mathbf{v}_k,\mathbf{v})\in \C$ by Lemma \ref{l:9}.
All vectors $\{\mathbf{u}_i\}_{i=1}^\gamma$, $\{p\mathbf{v}_j\}_{j=1}^\delta$ and
$\{pP(\mathbf{v}_k,\mathbf{v})|k\in\{\overline{k}+1,\ldots,\delta\}\}$ are of order $p$,
thus $\Phi(pP(\mathbf{v}_k,\mathbf{v}))$ can be represented linearly by $\{\Phi(\mathbf{u}_i)\}_{i=1}^\gamma$ and $\{\Phi(p\mathbf{v}_j)\}_{j=1}^\delta$ for $k\in\{\overline{k}+1,\ldots,\delta\}$.

Without loss of generality, let $\delta-\overline{k}=2$, and $j_1=\overline{k}+1, j_2=\overline{k}+2$.
Then $\Phi(a_{j_1}\mathbf{v}_{j_1}+a_{j_2}\mathbf{v}_{j_2})=\Phi(a_{j_1}\mathbf{v}_{j_1})+\Phi(a_{j_2}\mathbf{v}_{j_2})+\Phi(pP(a_{j_1}\mathbf{v}_{j_1},a_{j_2}\mathbf{v}_{j_2}))$.
We know $\Phi(pP(a_{j_1}\mathbf{v}_{j_1},a_{j_2}\mathbf{v}_{j_2}))$ is a linear combination of $\{\Phi(\mathbf{u}_i)\}_{i=1}^\gamma$, $\{\Phi(p\mathbf{v}_j)\}_{j=1}^\delta$.
Since $a_{j_1},a_{j_2}\in \Z_{p^2}$, it is easy to check that $\Phi(a_{j_1}\mathbf{v}_{j_1}),\Phi(a_{j_2}\mathbf{v}_{j_2})$ can also be generated by
$\{\Phi(\mathbf{u}_i)\}_{i=1}^\gamma$, $\{\Phi(p\mathbf{v}_j)\}_{j=1}^\delta$.

As a result, $\langle C\rangle$ can be generated by
$\{\Phi(\mathbf{u}_i)\}_{i=1}^\gamma$, $\{\Phi(\mathbf{v}_j)\}_{j=1}^\delta$, $\Sigma_{1\leq i_1\leq\cdots\leq i_l\leq \overline{k}}(\Phi(p\mathbf{v}_{i_1}*\cdots*\mathbf{v}_{i_l}))$ with $l\in L$.
Hence $\overline{r}\leq \sum_{l\in L}{\overline{k}+l-1 \choose l}-\overline{k}.$
\end{proof}

According to Lemma \ref{l:15}, $rank(C)\in\bigg\{\gamma+2\delta,\ldots,\gamma+2\delta+\sum_{l\in L}{\overline{k}+l-1 \choose l}-\overline{k}\bigg\}$,
and we have
$$rank(C)\in \bigg\{\gamma+2\delta,...,\min\bigg(\beta+\gamma+\kappa,\gamma+2\delta+\sum_{l\in L}{\overline{k}+l-1 \choose l}-\overline{k}\bigg)\bigg\}.$$

\begin{thm}\label{th:16}
Let $\alpha,\beta,\gamma,\delta,\kappa$ be positive integers satisfying (\ref{eq:1}).
Then, there is a $\mathbb{Z}_p \mathbb{Z}_{p^2}$-linear code $C$ of type $(\alpha,\beta;\gamma,\delta;\kappa)$ with $ker(C)=\gamma+2\delta-\overline{k}$ and $rank(C)=\gamma+2\delta+\overline{r}$ if and only if
$\overline{k}\in\{1,\ldots,\delta\}$ and
$\overline{r}\in \bigg\{1,\ldots,\min\bigg(\beta-(\gamma-\kappa)-\delta,\sum_{l\in L}{\overline{k}+l-1 \choose l}-\overline{k}\bigg)\bigg\}$
or $\overline{k}=\overline{r}=0$.
\end{thm}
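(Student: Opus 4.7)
The plan is to split the argument into necessity and sufficiency, and to reduce as much as possible to results already in hand, namely Theorem \ref{Th:13}, Lemma \ref{l:15} and Theorem \ref{rk}.

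For the necessity direction, I would first isolate the linear case. If $C$ is linear then $\overline{k}=0$, and by the statement of Lemma \ref{l:15}, $\overline{r}=0$ as well, giving the trailing clause $\overline{k}=\overline{r}=0$. Otherwise $\overline{k}\geq 1$, in which case $\overline{r}\geq 1$ (since $\overline{r}=0$ would make $C$ equal to $\langle C\rangle$, hence linear, contradicting $\overline{k}\geq 1$). The upper bound $\sum_{l\in L}\binom{\overline{k}+l-1}{l}-\overline{k}$ is exactly the content of Lemma \ref{l:15}. The upper bound $\beta-(\gamma-\kappa)-\delta$ on $\overline{r}$ comes from the rank bound $rank(C)\leq \beta+\gamma+\kappa$ established in Theorem \ref{base}(ii), translated through $\overline{r}=rank(C)-(\gamma+2\delta)$. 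Finally, the range $\overline{k}\in\{1,\ldots,\delta\}$ is guaranteed by Theorem \ref{Th:13}(i).

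For the sufficiency direction, given an admissible pair $(\overline{r},\overline{k})$, I would construct an explicit $\Z_p\Z_{p^2}$-additive code $\mathcal{C}$ of type $(\alpha,\beta;\gamma,\delta;\kappa)$ by modifying the generator matrix used in the proof of Theorem \ref{rk}. Concretely, partition the $\delta$ rows of order $p^2$ into a block of $\overline{k}$ ``active'' rows $\mathbf{v}_1,\ldots,\mathbf{v}_{\overline{k}}$ and a block of $\delta-\overline{k}$ ``inert'' rows $\mathbf{v}_{\overline{k}+1},\ldots,\mathbf{v}_\delta$, and design $S_r$ so that:
\begin{enumerate}
\item[(a)] the columns of $S_r$ involving the active block are chosen from the family $\{(m+1)I_{\overline{k}}\}_{m=1}^{(p-1)/2}$, $\{M^{\overline{k}}(\mathbf{x})\mid \mathbf{x}\in \Gamma_i\}_{i=1}^{p-2}$, $M^{\overline{k}}(\mathbf{1})$ used in the proof of Theorem \ref{rk}, truncated by removing $\sum_{l\in L}\binom{\overline{k}+l-1}{l}-\overline{k}-\overline{r}$ columns, so that these $\overline{k}$ rows contribute exactly $\overline{r}$ extra independent vectors to $\langle C\rangle$;
\item[(b)] the inert rows have zero entries in those extra columns, so that by Lemma \ref{l:9} the vectors $pP(\mathbf{v}_j,\mathbf{v})$ with $j>\overline{k}$ lie in $\mathcal{C}$ and hence $\Phi(\mathbf{v}_j)\in K(C)$ for $j=\overline{k}+1,\ldots,\delta$.
\end{enumerate}
This mirrors the construction used in the proof of Theorem \ref{Th:13}(ii), so that the kernel decomposition of Proposition \ref{prop:11} with the exact $\overline{k}$ active rows is achieved, giving $ker(C)=\gamma+2\delta-\overline{k}$, while Theorem \ref{rk} applied to the $\overline{k}\times(\beta-(\gamma-\kappa)-\delta)$ subblock delivers $rank(C)=\gamma+2\delta+\overline{r}$.

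The main obstacle I expect is the simultaneous verification that the active rows produce exactly $\overline{r}$ new generators of $\langle C\rangle$ while the inert rows remain in $K(C)$; in particular, one has to check that coupling the active block with the inert block through $p P(\cdot,\cdot)$ does not accidentally promote an inert row out of the kernel or collapse an active contribution in the rank. For the first concern, the key computation is that $p P(\mathbf{v}_j,\mathbf{v}_{i_1}*\cdots *\mathbf{v}_{i_l})$ with $j>\overline{k}$ produces only terms of the form $p\mathbf{v}_{j}*(\text{something})$ whose active coordinates vanish by construction (a), so the resulting vector is an $\mathbb{F}_p$-linear combination of $\{\mathbf{u}_i\}$ and $\{p\mathbf{v}_j\}$ and lies in $\mathcal{C}$. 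For the second concern, the truncation in (a) is made from the explicit spanning family of Theorem \ref{rk}, whose linear independence properties are already established there, so the counting of $\overline{r}$ independent new vectors is inherited. Once these two checks are made, combining them with the trivial linear case $\overline{k}=\overline{r}=0$ completes the proof.
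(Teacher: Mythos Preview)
Your strategy coincides with the paper's: split the $\delta$ rows of order $p^2$ into an ``active'' block of size $\overline{k}$ carrying the rank-producing columns from Theorem~\ref{rk} (with $\overline{k}$ in place of $\delta$) and an ``inert'' block of size $\delta-\overline{k}$ set to zero, and derive necessity from Lemma~\ref{l:15} together with the bound $rank(C)\le\beta+\gamma+\kappa$ of Theorem~\ref{base}(ii).

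There is one point where your construction is incomplete. Item~(b) only gives $ker(C)\ge\gamma+2\delta-\overline{k}$; you never verify that each active row $\mathbf{v}_j$, $1\le j\le\overline{k}$, actually lies \emph{outside} $K(C)$. The truncated column family in~(a) alone does not force this: for small $\overline{r}$ you may have removed enough columns that some $\mathbf{v}_j$ has entries only in $\{0,1\}$ across the extra block and hence satisfies $pP(\mathbf{v}_j,\mathbf{v})\in\mathcal{C}$ for all $\mathbf{v}$, making it inert after all. Your sentence ``this mirrors the construction used in the proof of Theorem~\ref{Th:13}(ii)'' is the right instinct but is not reflected in~(a) as written. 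The paper closes exactly this gap by prepending to the active block a single extra column $(p-1,p-1,\ldots,p-1)^T$ and invoking Theorem~\ref{Th:13}; this column alone guarantees $\Phi(\mathbf{v}_j)\notin K(C)$ for every $j\le\overline{k}$, independently of how many rank columns are subsequently retained. Once you insert that column explicitly, your argument and the paper's are the same.
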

\begin{proof}
Let $\C$ be a $\Z_p\Z_{p^2}$-additive code of type $(\alpha,\beta;\gamma,\delta;\kappa)$ with generator matrix
$$\mathcal{G}=\left(\begin{array}{cc|ccc}I_\kappa & T' & \mathbf{0} & \mathbf{0} & \mathbf{0}\\
\mathbf{0} & \mathbf{0} & \mathbf{0} & pI_{\gamma-\kappa} & \mathbf{0}\\
\hline
\mathbf{0} & S' & S_{r,k} & \mathbf{0} & I_\delta
\end{array}\right),$$
where $S_{r,k}$ is a matrix over $\Z_{p^2}$ of size $\delta\times (\beta-(\gamma-\kappa)-\delta)$,
and $C=\Phi(\C)$ is its corresponding $\Z_p\Z_{p^2}$-linear code.
The necessary condition is clear by Lemma \ref{l:15}.
Then we will show the sufficient condition.

If $\overline{k}=0$, then $C$ is a linear code and $\overline{r}=0$. When $\overline{k}\in\{1,\ldots,\delta\}$ is fixed, it is enough to construct $S_{r,k}$ over $\Z_{p^2}$ such that the code $\mathcal{C}$ reaches the bound $\overline{r}=\sum_{l\in L}{\overline{k}+l-1 \choose l}-\overline{k}$. Then $S_{r,k}$ is constructed in the following way.
Divide $S_{r,k}$ into two matrices $S_{1},S_{2}$, where $S_1$ is of size $\overline{k}\times(\beta-(\gamma-\kappa)-\delta+1)$,
$S_2$ is of size $(\delta-\overline{k})\times(\beta-(\gamma-\kappa)-\delta+1)$ such that
$S_{r,k}=\begin{pmatrix}S_1\\S_2\end{pmatrix}$, and let $S_2=(\mathbf{0})$.
Let the first column of $S_1$ be $\mathbf{c'}_1=(p-1,p-1,\ldots,p-1)^T$,
this can guarantee $ker(C)=\gamma+2\delta-\overline{k}$ by Theorem \ref{Th:13}.
About the remaining columns in $S_1$, they choose all columns from matrices
$\{2I_{\overline{k}},3I_{\overline{k}},\ldots,\frac{p+1}{2}I_{\overline{k}}\}$, $\{M^{\overline{k}}(\mathbf{x})| \mathbf{x}\in \Gamma_i\}_{i=1}^{p-2}$ and $M^{\overline{k}}(\mathbf{1})$
by replacing $\delta$ with $\overline{k}$ in the proof of Theorem \ref{rk}. By using the same argument with the proof of Theorem \ref{rk}, $rank(C)=\gamma+2\delta+\sum_{l\in L}{\overline{k}+l-1 \choose l}-\overline{k}$, i.e., $\overline{r}$ reaches the bound.
\end{proof}

\section{Rank and dimension of kernel of $\mathbb{Z}_5 \mathbb{Z}_{25}$-linear codes}\label{sec:6}
Let $\mathcal{C}$ be a $\mathbb{Z}_5 \mathbb{Z}_{25}$-additive code of type $(\alpha, \beta;\gamma,\delta;\kappa)$ and $C=\Phi(\mathcal{C})$ of length $\alpha+5\beta$. In this section, we will study the rank and kernel dimension of $\mathbb{Z}_5 \mathbb{Z}_{25}$-linear codes $C=\Phi(\mathcal{C})$ with the results above. If $p=5$, for all $\mathbf{u},\mathbf{v}\in \mathbb{Z}_5^\alpha \times \mathbb{Z}_{25}^\beta$,
it is easy to check that $\Phi(\mathbf{u}+\mathbf{v})=\Phi(\mathbf{u})+\Phi(\mathbf{v})+\Phi(5P(\mathbf{u},\mathbf{v}))$, where
\begin{eqnarray*}
 5P(\mathbf{u},\mathbf{v}) &=& 20 \mathbf{u}^4\ast\mathbf{v} + 15 \mathbf{u}^3\ast\mathbf{v}^2  + 15 \mathbf{u}^2\ast\mathbf{v}^3 + 20 \mathbf{u}\ast \mathbf{v}^4\\
 &&+15 \mathbf{u}^3\ast \mathbf{v} + 10 \mathbf{u}^2\ast  \mathbf{v}^2  + 15 \mathbf{u} \ast\mathbf{v}^3\\
 &&+20\mathbf{u}^2\ast  \mathbf{v} + 20 \mathbf{u}\ast \mathbf{v}^2
 \end{eqnarray*}
 by Lemma \ref{P3}. Hence, $5P(u,v)$ has terms of degree $3,4,5$ which $u,v\in \mathbb{Z}_{25}$.
\begin{thm}\label{Z51}
Let $\mathcal{C}$ be a $\mathbb{Z}_5\mathbb{Z}_{25}$-additive code of type $(\alpha, \beta;\gamma,\delta;\kappa)$ which satisfies (\ref{eq:1}),
$C=\Phi(\mathcal{C})$ be the corresponding $\Z_5\Z_{25}$-linear code of length $n=\alpha+5\beta$.

\begin{itemize}
\item[(i)] Let $\mathcal{G}$ be the generator matrix of $\mathcal{C}$,
and let $\{\mathbf{u}_i\}_{i=1}^\gamma$, $\{\mathbf{v}_j\}_{j=1}^\delta$ be the rows of order $5$ and $25$ in $\mathcal{G}$, respectively.
Then $\langle C\rangle$ is generated by $\{\Phi(\mathbf{u}_i)\}_{i=1}^\gamma$, $\{\Phi(\mathbf{v}_j)\}_{j=1}^\delta$, $\{\Phi(5\mathbf{v}_{i_1}*\mathbf{v}_{i_2}*\mathbf{v}_{i_3})\}_{1\leq i_1\leq i_2\leq i_3\leq \delta}$, $\{\Phi(5\mathbf{v}_{i_1}*\cdots*\mathbf{v}_{i_4})\}_{1\leq i_1\leq\cdots\leq i_4\leq \delta}$ and $\{\Phi(5\mathbf{v}_{i_1}*\cdots*\mathbf{v}_{i_5})\}_{1\leq i_1\leq\cdots\leq i_5\leq \delta}$.
\item[$(ii)$] $rank(C)\in \bigg\{\gamma+2\delta,...,\rm{min}\bigg(\beta+\gamma+\kappa,\gamma+\delta+\dbinom{\delta+2}{3}+\dbinom{\delta+3}{4}+\dbinom{\delta+4}{5}\bigg)\bigg\}$.
Let $rank(C)=r=\gamma+2\delta+\overline{r}$. Then $\overline{r}\in\bigg\{0,1,...,\rm{min}\bigg(\beta-(\gamma-\kappa)-\delta,\dbinom{\delta+2}{3}+\dbinom{\delta+3}{4}+\dbinom{\delta+4}{5}-\delta\bigg)\bigg\}$.
\item[$(iii)$] The linear code $\langle C\rangle$ over $\Z_5$ is $\mathbb{Z}_5 \mathbb{Z}_{25}$-linear.
\end{itemize}
\end{thm}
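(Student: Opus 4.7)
The plan is to specialize Theorem \ref{base} to the case $p=5$, since the heavy lifting has already been done in Section \ref{sec:4}. The first step is to compute the index set $L$ for $p = 5$: by definition $L = \{1+2m : 1 \leq m \leq (p-1)/2\} \cup \{p-1\}$, so $L = \{3, 5\} \cup \{4\} = \{3, 4, 5\}$. This determines exactly which families of star-products of the $\mathbf{v}_j$'s appear as generators of $\langle C\rangle$.

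For (i), I would apply Theorem \ref{base}(i) verbatim with $p = 5$: $\langle C \rangle$ is spanned by $\{\Phi(\mathbf{u}_i)\}_{i=1}^{\gamma}$, $\{\Phi(\mathbf{v}_j)\}_{j=1}^{\delta}$, and the three families $\{\Phi(5\mathbf{v}_{i_1} \ast \cdots \ast \mathbf{v}_{i_l})\}_{1 \leq i_1 \leq \cdots \leq i_l \leq \delta}$ for $l \in \{3, 4, 5\}$, which is exactly the list in the statement. As a consistency check, the explicit expansion of $5P(\mathbf{u}, \mathbf{v})$ displayed just before Theorem \ref{Z51} (obtained from Lemma \ref{P3} with $p=5$) contains only monomials $\mathbf{u}^a \ast \mathbf{v}^b$ whose total degrees lie in $\{3, 4, 5\}$, confirming that no lower- or higher-degree pure products of the $\mathbf{v}_j$'s are required.

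For (ii), the general upper bound from Theorem \ref{base}(ii) is $\gamma + \delta + \sum_{l \in L} \binom{\delta + l - 1}{l}$; substituting $L = \{3, 4, 5\}$ gives $\gamma + \delta + \binom{\delta+2}{3} + \binom{\delta+3}{4} + \binom{\delta+4}{5}$. The competing bound $\beta + \gamma + \kappa$ is read off the shape of $\mathcal{G}_S$ in Theorem \ref{th:1}. Taking the minimum yields the claimed interval for $rank(C)$, and subtracting $\gamma + 2\delta$ produces the interval for $\overline{r}$. Part (iii) is the direct restriction of Theorem \ref{base}(iii) to $p = 5$: the $\Z_5 \Z_{25}$-additive code $\mathcal{S}_{\mathcal C}$ generated by the pre-images of the spanning set in (i) satisfies $\Phi(\mathcal{S}_{\mathcal C}) = \langle C\rangle$.

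Because every assertion is a pure specialization, no serious obstacle arises. The one place worth double-checking is that the enumeration of admissible exponent pairs $(i, j) \in T$ in Lemma \ref{sij} does indeed single out precisely the index set $L = \{3, 4, 5\}$ when $p = 5$, so that the explicit formula for $5P(\mathbf{u}, \mathbf{v})$ preceding the theorem is correct and that no spurious term of degree $1$ or $2$ survives. This is a short modular-arithmetic verification against the Bernoulli-number identities referenced in the proof of Lemma \ref{sij}, and requires no new ideas.
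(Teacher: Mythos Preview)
Your proposal is correct and follows essentially the same approach as the paper: the paper's own proof simply notes that for $p=5$ one has $L=\{3,4,5\}$ and then states that all three parts are straightforward consequences of Theorem~\ref{base}. Your additional consistency checks (the explicit degree analysis of $5P(\mathbf{u},\mathbf{v})$ and the remark about Lemma~\ref{sij}) are more than the paper itself provides, but they are harmless elaborations of the same specialization argument.
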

\begin{proof}
Let $p=5$, then $L=\{3,4,5\}$. Therefore, they are straightforward results from Theorem \ref{base}.
\end{proof}

By Theorem \ref{Z51}, $\langle C\rangle$ is generated by $\{\Phi(\mathbf{u}_i)\}_{i=1}^\gamma$, $\{\Phi(\mathbf{v}_j)\}_{j=1}^\delta$, $\{\Phi(5\mathbf{v}_{i_1}*\mathbf{v}_{i_2}*\mathbf{v}_{i_3})\}_{1\leq i_1\leq i_2\leq i_3\leq \delta}$, $\{\Phi(5\mathbf{v}_{i_1}*\cdots*\mathbf{v}_{i_4})\}_{1\leq i_1\cdots\leq i_4\leq \delta}$ and $\{\Phi(5\mathbf{v}_{i_1}*\cdots*\mathbf{v}_{i_5})\}_{1\leq i_1\cdots\leq i_5\leq \delta}$. We can divide these vectors of $\langle C\rangle$ into the following four parts:
\begin{enumerate}
\item [(1)] $\{\Phi(\mathbf{u}_i)\}_{i=1}^\gamma$, $\{\Phi(\mathbf{v}_j)\}_{j=1}^\delta$ and $\{\Phi(5\mathbf{v}^{4}_j)\}_{j=1}^\delta$;
\item [(2)] $\{\Phi(5\mathbf{v}_j^3)\}_{j=1}^\delta$ and $\{\Phi(5\mathbf{v}_j^5)\}_{j=1}^\delta$;
\item [(3)] For $1\leq j\leq 3$, $\{\Phi(5\mathbf{v}^{a_1}_{i_1}*\cdots *\mathbf{v}^{a_{j+1}}_{i_{j+1}})\}_{1\leq i_1<\cdots < i_{j+1}\leq \delta}$ with $a_1+\cdots+a_{j+1}\in \{3,4,5\}$;
\item [(4)] $\{\Phi(5\mathbf{v}_{i_1}*\cdots*\mathbf{v}_{i_5})\}_{1\leq i_1<\cdots <i_5\leq \delta}$.
\end{enumerate}
The number $\gamma+\delta+\dbinom{\delta+2}{3}+\dbinom{\delta+3}{4}+\dbinom{\delta+4}{5}$ can be represented as
$$\gamma+\sum_{m=0}^{2}{\delta+2m \choose 2m+1}+{\delta+3 \choose 4}=\gamma+4{\delta \choose 1}+9{\delta \choose 2}+10{\delta \choose 3}+5{\delta \choose 4}+{\delta \choose 5},$$
by Lemma \ref{nckk}.

The following theorem can give a construction method to construct a $\Z_5\Z_{25}$-linear code for each value of $rank(C)$ which is shown in Theorem \ref{Z51}.
\begin{thm}\label{Z52}
Let $\alpha,\beta,\gamma,\delta,\kappa$ be positive integers satisfying (\ref{eq:1}).
Then, there is a $\mathbb{Z}_5 \mathbb{Z}_{25}$-linear code $C$ of type $(\alpha,\beta;\gamma,\delta;\kappa)$ with $rank(C)=r$ if and only if
$$r\in \bigg\{\gamma+2\delta,\ldots,\rm{min}\bigg(\beta+\delta+\kappa,\gamma+\delta+\dbinom{\delta+2}{3}+\dbinom{\delta+3}{4}+\dbinom{\delta+4}{5}\bigg)\bigg\}.$$
\end{thm}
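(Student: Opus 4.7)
The plan is to deduce Theorem \ref{Z52} as the direct specialization of Theorem \ref{rk} to the case $p=5$. The only $p$-dependent input is the set $L$: for $p=5$ one has $\frac{p-1}{2}=2$, so
$$L=\{1+2m\mid 1\leq m\leq 2\}\cup\{p-1\}=\{3,5\}\cup\{4\}=\{3,4,5\},$$
and consequently
$$\sum_{l\in L}\binom{\delta+l-1}{l}=\binom{\delta+2}{3}+\binom{\delta+3}{4}+\binom{\delta+4}{5}.$$
Substituting this into the upper bound $\min\bigl(\beta+\delta+\kappa,\,\gamma+\delta+\sum_{l\in L}\binom{\delta+l-1}{l}\bigr)$ coming from Theorem \ref{rk} reproduces exactly the right-hand set in the statement.

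For necessity, I would appeal to Theorem \ref{Z51}(ii): the rank of every $\Z_5\Z_{25}$-linear code of type $(\alpha,\beta;\gamma,\delta;\kappa)$ lies in the displayed range. The lower bound $\gamma+2\delta$ is forced because $\langle C\rangle\supseteq C$ has size at least $p^{\gamma+2\delta}$, while the upper bound follows from the form of $\mathcal{G}_S$ in Theorem \ref{th:1} combined with the explicit spanning set for $\langle C\rangle$ given in Theorem \ref{Z51}(i); substituting the computation of $L$ above converts the bound of Theorem \ref{rk} into the exact combinatorial expression appearing in the statement.

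For sufficiency, I would copy the construction in the proof of Theorem \ref{rk}, now with $p=5$ plugged in. Concretely, one assembles the matrix $S_r$ from four groups of column blocks: the diagonal matrices $2I_\delta$ and $3I_\delta$ (the cases $m=1,2$ in $(m+1)I_\delta$), which account for the single-variable vectors $\{\Phi(5\mathbf{v}_j^3)\}_{j}$ and $\{\Phi(5\mathbf{v}_j^5)\}_{j}$; the matrices $\{M^\delta(\mathbf{x})\mid \mathbf{x}\in\Gamma_i\}$ for $i=1,2,3$, which account for the mixed products of $2$, $3$ and $4$ distinct $\mathbf{v}$'s listed in part~(3) following Theorem \ref{Z51}; and finally $M^\delta(\mathbf{1})$ with $\mathbf{1}\in\Z_{25}^{5}$, accounting for the five-fold products in part~(4). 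Claim~(1) in the proof of Theorem \ref{rk} then produces a code attaining the upper bound $\gamma+\delta+\binom{\delta+2}{3}+\binom{\delta+3}{4}+\binom{\delta+4}{5}$; claim~(2), which guarantees that deleting a single column of $S_r$ drops the rank by exactly one, realizes every intermediate value of $r$ down to $\gamma+2\delta$.

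There is no genuine obstacle beyond bookkeeping: one only has to check that, with $p=5$, the four block groups above are in bijection with the four groups of basis vectors in Theorem \ref{Z51}(i), which is immediate from Lemma \ref{nckk} since the coefficients $\binom{\overline{i}+2m}{i}+\binom{p-2}{i}$ collapse to the sizes of $\Gamma_1,\Gamma_2,\Gamma_3$. The alternative case, in which the minimum equals $\beta+\delta+\kappa$ rather than the combinatorial sum, is handled trivially because the width of $S_r$ is then itself the binding constraint and one simply stops deleting columns at the appropriate point.
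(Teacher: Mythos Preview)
Your proposal is correct and follows essentially the same approach as the paper: both deduce the result by specializing Theorem~\ref{rk} to $p=5$, computing $L=\{3,4,5\}$ and invoking the $S_r$ construction (the blocks $2I_\delta,3I_\delta$, $\{M^\delta(\mathbf{x})\mid\mathbf{x}\in\Gamma_i\}_{i=1}^{3}$, and $M^\delta(\mathbf{1})$) together with claims~(1) and~(2). The only difference is cosmetic: the paper writes out the matrices $A_i^j$ and the sets $\Gamma_1,\Gamma_2,\Gamma_3$ explicitly for $p=5$, whereas you describe them abstractly by reference to the general construction.
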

\begin{proof}
Let $\C$ be a $\Z_5\Z_{25}$-additive code of type $(\alpha,\beta;\gamma,\delta;\kappa)$ with generator matrix
$$\mathcal{G}=\left(\begin{array}{cc|ccc}I_\kappa & T' & \mathbf{0} & \mathbf{0} & \mathbf{0}\\
\mathbf{0} & \mathbf{0} & 5T_1 & 5I_{\gamma-\kappa} & \mathbf{0}\\
\hline
\mathbf{0} & S' & S_r & \mathbf{0} & I_\delta
\end{array}\right),$$
where $S_r$ is a matrix over $\Z_{25}$ of size $\delta\times(\beta-(\gamma-\kappa)-\delta)$, and $C=\Phi(\C)$ is a $\Z_5\Z_{25}$-linear code.
Let $\{\mathbf{u}_i\}_{i=1}^\gamma$ and $\{\mathbf{v}_j\}_{j=1}^\delta$ be the row vectors of order $5$ and $25$ in $\mathcal{G}$, respectively.

By Theorem \ref{rk}, it is enough to construct $S_r$ over $\mathbb{Z}_{25}$ such that the code $\C$ reaches the bound $\gamma+\delta+\dbinom{\delta+2}{3}+\dbinom{\delta+3}{4}+\dbinom{\delta+4}{5}$. The construction is as follows.

When $i=1$ and $\frac{p+1}{2}=3$, then we have $\{2I_\delta,3I_\delta\}$. When $i=2$, $A_1^1=(4,4)$, $A_1^2=\left(
                                    \begin{array}{cc}
                                      3 & 3 \\
                                      3 & 4 \\
                                    \end{array}
                                  \right)
$, $A_1^3=\left(
            \begin{array}{cc}
              2 & 2 \\
              2 & 3 \\
              2 & 4 \\
            \end{array}
          \right)
$ and $A_1^4=\left(
               \begin{array}{cc}
                 1 & 1 \\
                 1 & 2 \\
                 1 & 3 \\
                 1 & 4 \\
               \end{array}
             \right)
$. Since $9={2 \choose 1}+{3 \choose 1}+{4 \choose 1}$, $\Gamma_1$ is a set of all rows from $A_1^2,A_1^3,A_1^4$. So $\Gamma_1=\{(1,1),(1,2),(1,3),(1,4),(2,2),(2,3),$
$(2,4),(3,3),(3,4)\}$ and matrices $\{M^\delta(\mathbf{x})|\mathbf{x} \in \Gamma_1\}$ are as follows
$$\left(
  \begin{array}{cc}
    1 & \cdots \\
    1 & \cdots \\
    \vdots & \ddots \\
  \end{array}
\right)
\left(
  \begin{array}{cc}
    1 & \cdots \\
    2 & \cdots \\
    \vdots & \ddots \\
  \end{array}
\right)\cdots
\left(
  \begin{array}{cc}
    3 & \cdots \\
    4 & \cdots \\
    \vdots & \ddots \\
  \end{array}
\right).
$$
When $i=3$, $A_2^2=(3,3,3)$, $A_2^3=\left(
                                      \begin{array}{ccc}
                                        2 & 2 & 2 \\
                                        2 & 2 & 3 \\
                                        2 & 3 & 3 \\
                                      \end{array}
                                    \right)
$ and $A_2^4=\left(
               \begin{array}{ccc}
                 1 & 1 & 1 \\
                 1 & 1 & 2 \\
                 1 & 1 & 3 \\
                 1 & 2 & 2 \\
                 1 & 2 & 3 \\
                 1 & 3 & 3 \\
               \end{array}
             \right)
$. Since $10={2 \choose 2}+{3 \choose 2}+{4 \choose 2}$, $\Gamma_2$ is a set of all rows from $A_2^2,A_2^3,A_2^4$ and matrices $\{M^\delta(\mathbf{x})|\mathbf{x} \in \Gamma_2\}$ are as follows
$$\left(
    \begin{array}{cc}
      1 & \cdots \\
      1 & \cdots \\
      1 & \cdots \\
      \vdots & \ddots \\
    \end{array}
  \right)
  \left(
    \begin{array}{cc}
      1 & \cdots \\
      1 & \cdots \\
      2 & \cdots \\
      \vdots & \ddots \\
    \end{array}
  \right)\cdots
  \left(
    \begin{array}{cc}
      3 & \cdots \\
      3 & \cdots \\
      3 & \cdots \\
      \vdots & \ddots \\
    \end{array}
  \right).
$$
When $i=4$, $A_3^3=(2,2,2,2)$ and $A_3^4=\left(
                                          \begin{array}{cccc}
                                            1 & 1 & 1 & 1 \\
                                            1 & 1 & 1 & 2 \\
                                            1 & 1 & 2 & 2 \\
                                            1 & 2 & 2 & 2 \\
                                          \end{array}
                                        \right)
$. Since $5={3 \choose 3}+{4 \choose 3}$, $\Gamma_3$ is a set of all rows from $A^3_3$ and $A_3^4$ an matrices $\{M^\delta(\mathbf{x})|\mathbf{x} \in \Gamma_3\}$ are as follows
$$\left(
    \begin{array}{cc}
      1 & \cdots \\
      1 & \cdots \\
      1 & \cdots \\
      1 & \cdots \\
      \vdots & \ddots \\
    \end{array}
  \right)
  \left(
    \begin{array}{cc}
      1 & \cdots \\
      1 & \cdots \\
      1 & \cdots \\
      2 & \cdots \\
      \vdots & \ddots \\
    \end{array}
  \right)\cdots
  \left(
    \begin{array}{cc}
      2 & \cdots \\
      2 & \cdots \\
      2 & \cdots \\
      2 & \cdots \\
      \vdots & \ddots \\
    \end{array}
  \right).
$$
When $i=5$, then we have $A_4^4=(1,1,1,1,1)$ and $M^\delta(\mathbf{1})$. Let $S_r$ be composed of $\{2I_\delta,3I_\delta\}$, $\{M^\delta(\mathbf{x})| \mathbf{x}\in \Gamma_i\}_{i=1}^{3}$ and $M^\delta(\mathbf{1})$.

Using the similar argument with the proof of Theorem \ref{rk}, when the generator matrix $\mathcal{G}$ of the code $C$ with such $S_r$, $rank(C)$ reaches the bound $\gamma+\delta+\dbinom{\delta+2}{3}+\dbinom{\delta+3}{4}+\dbinom{\delta+4}{5}$.
\end{proof}

With the proofs of Theorems \ref{Th:13}, \ref{Z52} and Lemma \ref{l:15}, we have the following results.
\begin{thm}
Let $\alpha,\beta,\gamma,\delta,\kappa$ be integers satisfying (\ref{eq:1}), then we have the following two assertions:
\begin{enumerate}
  \item [\rm (i)] Let $C$ be a $\mathbb{Z}_5 \mathbb{Z}_{25}$-linear code $C$ of type $(\alpha,\beta;\gamma,\delta;\kappa)$,
then $ker(C)=\gamma+2\delta-\overline{k}$, where $\overline{k} \in\{0,1,\ldots,\delta\}$.
  \item [\rm (ii)] If $\overline{k} \in\{0,1,\ldots,\delta\}$,
 then there exists a $\mathbb{Z}_5\mathbb{Z}_{25}$-linear code $C$ of type $(\alpha,\beta;\gamma,\delta;\kappa)$ with $ker(C)=\gamma+2\delta-\overline{k}$.
  \end{enumerate}
\end{thm}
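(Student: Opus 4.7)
The plan is to treat this theorem as the direct specialization of Theorem \ref{Th:13} to the case $p = 5$, since a $\mathbb{Z}_5\mathbb{Z}_{25}$-linear code is just a $\mathbb{Z}_p\mathbb{Z}_{p^2}$-linear code with $p = 5$, and the range $\{0, 1, \ldots, \delta\}$ for $\overline{k}$ does not depend on $p$. Part (i) therefore follows immediately: for any $\mathbb{Z}_5\mathbb{Z}_{25}$-linear code $C$ of type $(\alpha,\beta;\gamma,\delta;\kappa)$, Theorem \ref{Th:13}(i) forces $\ker(C) = \gamma + 2\delta - \overline{k}$ for some $\overline{k} \in \{0, 1, \ldots, \delta\}$.

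For part (ii), I would give a construction that realises every admissible $\overline{k}$. Start with the standard generator matrix $\mathcal{G}$ from Theorem \ref{th:1} (equivalently, the shape used in the proof of Theorem \ref{th:16}) and focus on the $\delta \times (\beta - (\gamma - \kappa) - \delta)$ block $S$ over $\mathbb{Z}_{25}$. When $\overline{k} = 0$, take $S = (\mathbf{0})$; then every codeword of $\mathcal{C}$ of order $25$ maps, under $\Phi$, into $K(C)$ by Lemma \ref{l:9}, so $C$ is linear and $\ker(C) = \gamma + 2\delta$. When $\overline{k} \geq 1$, split $S$ horizontally as $S = \bigl(\begin{smallmatrix}S_1\\ S_2\end{smallmatrix}\bigr)$ with $S_1$ of size $\overline{k} \times (\beta - (\gamma-\kappa) - \delta)$ and $S_2 = (\mathbf{0})$, and take the first column of $S_1$ to be $(4, 4, \ldots, 4)^T$ (i.e.\ $(p-1,\ldots,p-1)^T$ at $p=5$), with all other columns zero.

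To verify that this gives $\ker(C) = \gamma + 2\delta - \overline{k}$, I would invoke Lemma \ref{l:9} together with the argument already used for Theorem \ref{Th:13}(ii) in \cite{SWL}: the rows $\mathbf{v}_{\overline{k}+1}, \ldots, \mathbf{v}_\delta$ of order $25$ (those whose associated row of $S$ is zero) satisfy $5P(\mathbf{v}_j, \mathbf{v}) \in \mathcal{C}$ for every $\mathbf{v} \in \mathcal{C}$ (since the relevant monomials $5\mathbf{v}^a \ast \mathbf{v}^b$ lie in the subcode generated by the rows of order $p$), so their $\Phi$-images lie in $K(C)$; whereas for each $j \leq \overline{k}$, the nonzero entry $4$ in the first column of $S_1$ forces $5P(\mathbf{v}_j, \mathbf{v}_j) \notin \mathcal{C}$, ejecting $\Phi(\mathbf{v}_j)$ from $K(C)$. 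Together with Proposition \ref{prop:11}, this pins down $\ker(C)$ exactly.

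There is no real obstacle here beyond careful bookkeeping: every ingredient — Lemma \ref{l:9}, Proposition \ref{prop:11}, Theorem \ref{Th:13}, and the explicit $S_{r,k}$ construction from the proof of Theorem \ref{th:16} — has already been established for arbitrary odd prime $p$, and specialising to $p = 5$ only requires noting that the quadratic term $5P(u,v) = 20u^2\!*\!v + 20u\!*\!v^2 + \cdots$ computed via Lemma \ref{P3} agrees with the formula used in the general argument. Thus both assertions follow by direct specialisation, and the proof can be stated in a few lines.
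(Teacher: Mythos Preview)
Your proposal is correct and matches the paper's approach exactly: the paper states this theorem without proof, prefacing it only with ``With the proofs of Theorems \ref{Th:13}, \ref{Z52} and Lemma \ref{l:15}, we have the following results,'' i.e.\ it too treats the statement as the immediate $p=5$ specialization of Theorem \ref{Th:13}. Your added detail on the explicit $S$-block construction (the all-$4$ column for the first $\overline{k}$ rows) is precisely the specialization of the construction from \cite{SWL} and Theorem \ref{th:16} that the paper is implicitly invoking.
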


\begin{lem}
Let $\C$ be a $\Z_5\Z_{25}$-additive code of type $(\alpha,\beta;\gamma,\delta;\kappa)$,
and $C=\Phi(\C)$ is the corresponding $\Z_5\Z_{25}$-linear code.
If $rank(C)=\gamma+2\delta+\overline{r}$ and $ker(C)=\gamma+2\delta-\overline{k}$, where $\overline{k}\in\{1,\ldots,\delta\}$,
then $$1\leq\overline{r}\leq \dbinom{\overline{k}+2}{3}+\dbinom{\overline{k}+3}{4}+\dbinom{\overline{k}+4}{5}-\overline{k}.$$
Moreover, $\overline{r}=0$ if $\overline{k}=0$.
\end{lem}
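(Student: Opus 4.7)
The plan is to apply Lemma \ref{l:15} directly, specialized to $p=5$. Lemma \ref{l:15} already proves the analogous bound for general odd prime $p$: namely, if $\overline{k}=0$ then $C$ is linear and $\overline{r}=0$, and if $\overline{k}\in\{1,\ldots,\delta\}$ then $\overline{r}\geq 1$ and
$$\overline{r}\leq \sum_{l\in L}\binom{\overline{k}+l-1}{l}-\overline{k},$$
where $L=\{1+2m\mid 1\leq m\leq \tfrac{p-1}{2}\}\cup\{p-1\}$. So the only thing to check is that setting $p=5$ collapses this formula to the stated one.

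First I would unpack the index set: when $p=5$, we have $\tfrac{p-1}{2}=2$, so $\{1+2m\mid 1\leq m\leq 2\}=\{3,5\}$, and together with $\{p-1\}=\{4\}$ this gives $L=\{3,4,5\}$. Substituting into the general bound yields
$$\sum_{l\in\{3,4,5\}}\binom{\overline{k}+l-1}{l}=\binom{\overline{k}+2}{3}+\binom{\overline{k}+3}{4}+\binom{\overline{k}+4}{5},$$
which is exactly the claimed upper bound after subtracting $\overline{k}$.

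Second, for the moreover part, if $\overline{k}=0$ then by Theorem \ref{Th:13} the kernel coincides with $C$, so $C=K(C)$ is linear, forcing $\langle C\rangle=C$ and hence $\overline{r}=0$. This is precisely the $p=5$ case of the final clause of Lemma \ref{l:15}.

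Since both assertions follow from a direct specialization, there is no genuine obstacle; the only bookkeeping is to verify that the generating set for $\langle C\rangle$ described in Theorem \ref{Z51}(i) matches the $p=5$ instance of the generating set used in the proof of Lemma \ref{l:15}, which it does by construction ($L=\{3,4,5\}$). Thus the lemma is an immediate corollary of Lemma \ref{l:15}, and I would phrase the proof simply as: ``Apply Lemma \ref{l:15} with $p=5$, noting that $L=\{3,4,5\}$ in this case.''
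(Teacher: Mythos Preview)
Your proposal is correct and matches the paper's approach exactly: the paper does not even write out a separate proof for this lemma, merely stating that it follows from Lemma~\ref{l:15} (together with Theorems~\ref{Th:13} and~\ref{Z52}) specialized to $p=5$. Your verification that $L=\{3,4,5\}$ and hence $\sum_{l\in L}\binom{\overline{k}+l-1}{l}=\binom{\overline{k}+2}{3}+\binom{\overline{k}+3}{4}+\binom{\overline{k}+4}{5}$ is precisely the bookkeeping required.
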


\begin{thm}
Let $\alpha,\beta,\gamma,\delta,\kappa$ be positive integers satisfying (\ref{eq:1}).
Then, there is a $\mathbb{Z}_p \mathbb{Z}_{p^2}$-linear code $C$ of type $(\alpha,\beta;\gamma,\delta;\kappa)$ with $ker(C)=\gamma+2\delta-\overline{k}$ and $rank(C)=\gamma+2\delta+\overline{r}$ if and only if
$\overline{k}\in\{1,\ldots,\delta\}$ and
$\overline{r}\in \bigg\{1,\ldots,\min\bigg(\beta-(\gamma-\kappa)-\delta,\dbinom{\overline{k}+2}{3}+\dbinom{\overline{k}+3}{4}+\dbinom{\overline{k}+4}{5}-\overline{k}\bigg)\bigg\}$
or $\overline{k}=\overline{r}=0$.
\end{thm}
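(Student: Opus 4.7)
The plan is to specialize the proof of Theorem~\ref{th:16} to $p=5$, leveraging the explicit matrix construction developed in Theorem~\ref{Z52}. The necessity is immediate: when $\overline{k}=0$ the code $C$ is linear and so $\overline{r}=0$; when $\overline{k}\geq 1$ the upper bound $\overline{r}\leq \dbinom{\overline{k}+2}{3}+\dbinom{\overline{k}+3}{4}+\dbinom{\overline{k}+4}{5}-\overline{k}$ is the content of the preceding lemma (the $p=5$ version of Lemma~\ref{l:15}), and the bound $\overline{r}\leq \beta-(\gamma-\kappa)-\delta$ is forced by the column structure of the standard generator matrix in Theorem~\ref{th:1}. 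The strict inequality $\overline{r}\geq 1$ when $\overline{k}\geq 1$ holds because $C$ is then necessarily nonlinear.

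For the sufficiency, I start with a $\Z_5\Z_{25}$-additive code $\C$ of type $(\alpha,\beta;\gamma,\delta;\kappa)$ whose generator matrix has the block form
\[
\mathcal{G}=\left(\begin{array}{cc|ccc}I_\kappa & T' & \mathbf{0} & \mathbf{0} & \mathbf{0}\\
\mathbf{0} & \mathbf{0} & \mathbf{0} & 5I_{\gamma-\kappa} & \mathbf{0}\\
\hline
\mathbf{0} & S' & S_{r,k} & \mathbf{0} & I_\delta
\end{array}\right),
\]
and I partition $S_{r,k}=\left(\begin{array}{c}S_1\\ S_2\end{array}\right)$, where $S_1$ has $\overline{k}$ rows and $S_2$ has $\delta-\overline{k}$ rows; set $S_2=\mathbf{0}$. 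This immediately puts $\Phi(\mathbf{v}_j)\in K(C)$ for all $j>\overline{k}$. Next I fix the first column of $S_1$ to be $(4,4,\ldots,4)^T$, which by the argument used in Theorem~\ref{Th:13}(ii) guarantees $\Phi(\mathbf{v}_j)\notin K(C)$ for $j\leq \overline{k}$, forcing $ker(C)=\gamma+2\delta-\overline{k}$. The remaining columns of $S_1$ are selected from the explicit blocks $\{2I_{\overline{k}},3I_{\overline{k}}\}$, $\{M^{\overline{k}}(\mathbf{x})\mid \mathbf{x}\in\Gamma_i\}_{i=1}^{3}$ and $M^{\overline{k}}(\mathbf{1})$ constructed in Theorem~\ref{Z52}, but with $\delta$ replaced throughout by $\overline{k}$.

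Using all of these columns realizes the upper extreme $\overline{r}=\dbinom{\overline{k}+2}{3}+\dbinom{\overline{k}+3}{4}+\dbinom{\overline{k}+4}{5}-\overline{k}$; to reach any intermediate value of $\overline{r}$ I delete columns one at a time exactly as in the $t=1$ step of Theorem~\ref{rk} (specialized to $p=5$ in Theorem~\ref{Z52}), which decreases the rank by precisely one per deletion without touching the kernel (the first column of $S_1$ is retained). The main obstacle is verifying that this truncated construction, acting on only $\overline{k}$ "active" rows rather than all $\delta$, still produces linear independence of the vectors $\{\Phi(5\mathbf{v}_{i_1}*\cdots*\mathbf{v}_{i_l})\}_{1\leq i_1\leq\cdots\leq i_l\leq\overline{k}}$ with $l\in L=\{3,4,5\}$, while the analogous tensor products involving any $\mathbf{v}_j$ with $j>\overline{k}$ either collapse or are absorbed into the linear span of $\{\Phi(\mathbf{u}_i)\}_{i=1}^{\gamma}\cup\{\Phi(5\mathbf{v}_j)\}_{j=1}^{\delta}$ (by the argument in the proof of the preceding lemma). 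This verification is a direct transcription of the independence claims of Theorem~\ref{Z52}, applied to the top $\overline{k}$ rows of $\mathcal{G}$, combined with the Gray-map expansion recorded in Theorem~\ref{Z51}.
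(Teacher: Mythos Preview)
Your proposal is correct and follows essentially the same approach as the paper: the paper does not give a separate proof of this $p=5$ statement at all, but simply records it as an immediate consequence of Theorems~\ref{Th:13}, \ref{Z52} and Lemma~\ref{l:15} (equivalently, the specialization of Theorem~\ref{th:16}). Your write-up is a faithful and more detailed unpacking of that specialization, including the column-deletion step for intermediate $\overline{r}$ and the observation that retaining the first column $(4,\ldots,4)^T$ of $S_1$ pins down $ker(C)$ throughout.
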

\section{Conclusion}\label{sec:7}
In this paper, we studied the rank of $\Z_p\Z_{p^2}$-linear codes.
Using combinatorial enumeration techniques, we gave the lower and upper bounds of the rank of $\Z_p\Z_{p^2}$-linear codes.
Moreover, we have constructed $\Z_p\Z_{p^2}$-linear codes for each value of the rank.
Finally, we also constructed $\Z_p\Z_{p^2}$-linear codes for pairs of values of rank and the dimension of the kernel. These results can be helpful to the classification of $\Z_p\Z_{p^2}$-linear codes.


\begin{thebibliography}{99}\addtolength{\itemsep}{-7pt}
\bibitem{AS13} Aydogdu I., Siap I.: The structure of $\Z_2\Z_{2^s}$-additive codes: bounds on the minimum distance. Appl. Math. Inf. Sci.. {\bf 7}(6), 2271--2278 (2013).

\bibitem{AS15} Aydogdu I., Siap I.: On $\Z_{p^r}\Z_{p^s}$-additive codes. Linear Multilinear Algebra. {\bf 63}(10), 2089--2102 (2015).

\bibitem{BBDF11} Bilal M., Borges J., Dougherty S. T., Fern\'{a}ndez C.: Maximum distance separable codes over $\Z_4$ and $\Z_2\times \Z_4$. Des. Codes Cryptogr.. {\bf 61}(1), 31--40 (2011).

\bibitem{BBDF20} Benbelkacema N., Borges J., Dougherty S. T., Fern\'{a}ndez C.: On $\Z_2\Z_4$-additive complementary dual codes and related LCD codes. Finite Fields Appl.. {\bf 62}, 101622 (2020).

\bibitem{BPR}  {Borges J., Phelps K. T., Rif\`{a} J.: The rank and kernel of extended 1-perfect $\Z_4$-linear and additive non-$\Z_4$-linear codes. IEEE Trans. Inform. Theory. {\bf 49}(8), 2028--2034 (2003).}

\bibitem{BC}  {Borges J., Fern\'{a}ndez C.: There is exactly one $\Z_2\Z_4$-cyclic 1-perfect code. Des. Codes Cryptogr.. {\bf 85}(3), 557--566 (2017).}

\bibitem{BC2}  {Borges J., Fern\'{a}ndez C.: A characterization of $\Z_2\Z_2[u]$-linear codes. Des. Codes Cryptogr.. {\bf 86}(7), 1377--1389 (2018).}

\bibitem{JCR} {Borges J., Fern\'{a}ndez C., Ten-Valls R.: $\Z_2\Z_4$-additive cyclic codes, generator polynomials, and dual codes. IEEE Trans. Inform. Theory. {\bf 62}(11), 6348--6354 (2016).}

\bibitem{BFPR10} Borges J., Fern\'{a}ndez C., Pujol J., Rif\`{a} J.: Villanueva M.: $\Z_2\Z_4$-linear codes: generator matrices and duality. Des. Codes Cryptogr.. {\bf 54}(2), 167--179 (2010).

\bibitem{JTCR} {Borges J., Dougherty S. T., Fern\'{a}ndez C., Ten-Valls R.: Binary images of $\Z_2\Z_4$-additive cyclic codes. IEEE Trans. Inform. Theory. {\bf 64}(12), 7551--7556 (2018).}

\bibitem{JTCR2} {Borges J., Dougherty, S. T., Fern\'{a}ndez C., Ten-Valls R.: $\Z_2\Z_4$-additive cyclic codes: kernel and rank. IEEE Trans. Inform. Theory. {\bf 65}(4), 2119--2127 (2019).}

\bibitem{BBCM}  {Bernal J. J., Borges J., Fern\'{a}ndez C., Villanueva M.: Permutation decoding of $\Z_2\Z_4$-linear codes. Des. Codes Cryptogr.. {\bf 76}(2), 269--277 (2015).}

\bibitem{DLY16} Dougherty S. T., Liu H., Yu L.: One weight $\Z_2\Z_4$-additive codes. AAECC. {\bf 27}(2), 123--138 (2016).

\bibitem{DP}  {Delsarte P.: An algebraic approach to the association schemes of coding theory. Philips Res. Rep. Suppl.. {\bf 10}(2), 1--97 (1973).}

\bibitem{a7} Carlet C.: $\Z_{2^k}$-linear codes. IEEE Trans. Inf. Theory. {\bf 44}(4), 1543--1547 (1998).

\bibitem{Bn}Carlitz, L: Bernoulli Numbers. Fib. Quart. {\bf 6}, 71--85, (1968).

\bibitem{CFC} {Fern\'{a}ndez C., Pujol J., Villanueva M.: $\mathbb{Z}_2\mathbb{Z}_4$-linear codes: rank and kernel. Des. Codes Cryptogr.. {\bf 56}(1), 43--59 (2010).}
\bibitem{FPV}  {Fern\'{a}ndez C., Pujol J., Villanueva M.: On rank and kernel of $\Z_4$-linear codes. Lecture Notes in Computer Science. {\bf 5228}, 46--55 (2008).}

\bibitem{a11} Fern\'{a}ndez C., Vela C., Villanueva M.: On $\Z_{2^s}$-linear Hadamard codes: kernel and partial classification. Des. Codes Cryptogr.. {\bf 87}(2--3), 417--435 (2019).

\bibitem{a13} Greferath M., Schmidt S.E.: Gray isometries for finite chain rings and a nonlinear ternary (36, 312, 15) code. IEEE Trans. Inf. Theory. {\bf 45}(7), 2522--2524 (1999).

\bibitem{a15} Hammons A.R., Kumar P.V., Calderbank A.R., Sloane N.J., Sol$\acute{e}$ P.: The $\Z_4$-linearity of Kerdock, Preparata, Goethals, and related codes. IEEE Trans. Inf. Theory. {\bf 40}(2), 301--319 (1994).

\bibitem{a19} Krotov D.: On $\Z_{2^s}$-dual binary codes. IEEE Trans. Inf. Theory. {\bf 53}(4), 1532--1537 (2007).

\bibitem{PRV}  {Phelps K. T., Rif\`{a} J., Villanueva M.: On the additive $\Z_4$-linear and non-$\Z_4$-linear Hadamard codes, rank and kernel. IEEE Trans. Inform. Theory. {\bf 52}(1), 316--319 (2015).}
\bibitem{LS} {Ling S., Blackford J. T.: $\Z_{p^{k+1}}$-linear codes. IEEE Trans. Inform. Theory. {\bf 48}(9), 2592--2605 (2002).}

\bibitem{M92} Massey J. L.: Linear codes with complementary dual. Discrete Math.. {\bf 106/107}:337--342 (1992).

\bibitem{SWL} Shi M., Wang S., Li X.,: $\mathbb{Z}_p\mathbb{Z}_{p^2}$-linear codes: rank and kernel, arXiv: 2205. 13981 (2022).

\bibitem{SW} Shi M., Wang, C.,  Wu R., Hu Y., Chang Y.: One-weight and two-weight $\Z_2\Z_2[u,v]$-additive codes, Cryptography and Communications, Discrete Structures, Boolean Functions and Sequences. {\bf 12}(3), 443-454 (2020).

\bibitem{SWD} {Shi M., Wu R., Krotov D.: On $\Z_p\Z_{p^k}$-additive codes and their duality. IEEE Trans. Inform. Theory. {\bf 65}(6), 3842--3847 (2018).}

\bibitem{Lac} Waring E.: Problems concerning interpolations. Philosophical Transactions of the Royal Society. {\bf 69}, 59--67 (1779). doi:10.1098/rstl.1779.0008.


\bibitem{WS} Wu R., Shi M.: Some classes of mixed alphabet codes with few weights, IEEE Communications Letters, {\bf 22}(5), 1431-1434 (2021).

\bibitem{YZ20} Yao T., Zhu S.X.: $\Z_p\Z_{p^s}$-additive cyclic codes are asymptotically good. Cryptogr. Commun.. {\bf 12}(2), 253--264 (2020).
\end{thebibliography}
\end{document}